\DeclareMathOperator*{\argmax}{arg\,max}
\DeclareMathOperator*{\argmin}{arg\,min}
\newcommand{\ignore}[1]{}
\newtheorem{theorem}             {Theorem}
\newtheorem{corollary}  [theorem]{Corollary}
\newtheorem{definition} [theorem]{Definition}
\DeclareMathOperator*{\bin}{bin}
\newcommand{\POMCwp}{POMC$^\text{wp}$}
\DeclareSymbolFont{symbolsC}{U}{pxsyc}{m}{n}
\DeclareMathSymbol{\colonequals}{\mathrel}{symbolsC}{"42}
\begin{document}
	
	\title{Pareto Optimization for Subset Selection with Dynamic Cost Constraints\footnote{A preliminary version of this article has been presented at the Thirty-Third {AAAI} Conference on Artificial Intelligence ({AAAI} 2019)~\cite{DBLP:conf/aaai/RoostapourN0019}}}

\author[1]{Vahid Roostapour\thanks{vahid.roostapour@adelaide.edu.au, vahid.roostapour@gmail.com}}
\author[1]{Aneta Neumann\thanks{aneta.neumann@adelaide.edu.au}}
\author[1]{Frank Neumann\thanks{frank.neumann@adelaide.edu.au}}
\author[2]{\\Tobias Friedrich\thanks{tobias.friedrich@hpi.de}}
\affil[1]{Optimisation and Logistics, School of Computer Science, The University of Adelaide, Adelaide, Australia}
\affil[2]{Chair of Algorithm Engineering, Hasso Plattner Institute, Potsdam, Germany}

\renewcommand\Authands{ and }
\maketitle

\begin{abstract}
	We consider the subset selection problem for function $f$ with constraint bound $B$ that changes over time. Within the area of submodular optimization, various greedy approaches are commonly used.  For dynamic environments we observe that the adaptive variants of these greedy approaches are not able to maintain their approximation quality. Investigating the recently introduced POMC Pareto optimization approach, we show that this algorithm efficiently computes a $\phi= (\alpha_f/2)(1-\frac{1}{e^{\alpha_f}})$-approximation, where $\alpha_f$ is the submodularity ratio of $f$, for each possible constraint bound $b \leq B$. Furthermore, we show that POMC is able to adapt its set of solutions quickly in the case that $B$ increases. Our experimental investigations for the influence maximization in social networks show the advantage of POMC over generalized greedy algorithms. We also consider EAMC, a new evolutionary algorithm with polynomial expected time guarantee to maintain $\phi$ approximation ratio, and NSGA-II with two different population sizes as advanced multi-objective optimization algorithm, to demonstrate their challenges in optimizing the maximum coverage problem. Our empirical analysis shows that, within the same number of evaluations, POMC is able to perform as good as NSGA-II under linear constraint, while EAMC performs significantly worse than all considered algorithms in most cases.
\end{abstract}


\section{Introduction}

Many combinatorial optimization problems consist of optimizing a given function under a given constraint. Constraints often limit the resources available to solve a given problem and may change over time. For example, in the area of supply chain management, the problem may be constrained by the number of vehicles available that may change during the optimization process due to vehicle failures and vehicles being repaired or added as a new resource. As another example, consider the problem of using a train to transport extracted materials form a number of mines to a port. In this example, the number of available wagons could change during the path because of different reasons. The question would be how to re-allocate the available wagons to transform the high priority materials first.

Evolutionary algorithms as general-purpose problem solvers have been widely used to tackle dynamic optimization problems theoretically and empirically~\cite{DBLP:conf/gecco/Yang15}. Many theoretical investigations considered the performance of evolutionary algorithms in the dynamic environments, where either the constraint or some properties of the search space might change during the optimization process. Neumann and Witt studied the behavior of the simplest evolutionary algorithm called (1+1)~EA on the dynamic makespan scheduling problem, in which the adversary can change the processing time of a job~\cite{DBLP:conf/ijcai/NeumannW15}.
They calculated the expected time of finding a good quality solution in different situations, where the algorithms either start from scratch and the dynamic changes happen frequently or start from a good quality solution and aim to find a new good quality solution after one change. The same analyses have also been considered on dynamic graph problems where a dynamic change might add or remove an edge to or from the graph~\cite{DBLP:conf/gecco/Bossek0PS19,POURHASSAN2019}. More recently, Shi et al. theoretically investigated the re-optimization time of multi-objective evolutionary algorithms in the dynamic environment for general linear functions~\cite{DBLP:journals/algorithmica/ShiSFKN19}. They proved the beneficial effect of populations in multi-objective approaches dealing with dynamic alterations. Their theoretical results have been extended to experimental studies in~\cite{roostapour2020pareto} by benchmarking dynamic knapsack problem. Roostapour et al. showed that the simple multi-objective algorithms that limit their population size by storing solutions based on their size perform significantly worse than the others.  

Submodular functions form an important class of problems as many important optimization problems can be modeled by them. A function is submodular when adding an element to a solution set be less beneficial than adding the same element to any of its subsets.
The area of submodular function optimization under given static constraints has been studied quite extensively in the literature. Nemhauser et al. considered maximizing nondecreasing submodular functions under cardinality constraint~\cite{nemhauser1981maximizing}. They proved that the greedy algorithm that iteratively adds a new element with a maximum marginal gain could maintain $(1-1/e)$-approximation. Following their results, Khuller et al. showed the unbounded approximation of the greedy approach in maximizing submodular functions under linear constraint~\cite{DBLP:journals/ipl/KhullerMN99}. Generalizing the greedy approach by a simple modification, however, they showed that it could guarantee optimal $(1/2)(1-1/e)$-approximation in particular submodular function for maximum coverage problem. The generalized greedy algorithm (Algorithm~\ref{alg:GGA}) compares the result of the iteratively found solution with the best feasible single-element solution and returns the maximum. Krause and Guestrin extended $(1/2)(1-1/e)$ approximation ratio to maximizing submodular functions under linear constraint~\cite{krause2005note}, which had been improved to $(1-1/\sqrt{e})$ by Lin and Bilmes~\cite{DBLP:conf/naacl/LinB10}.

In the case of monotone submodular functions, greedy algorithms are often able to achieve the best possible worst case approximation guarantee (unless P=NP). Motivated by many real-world applications, the performance of greedy algorithms in general cost functions have also been considered later. Iyer and Bilmes studied cases that the cost function is monotone and submodular~\cite{DBLP:conf/nips/IyerB13}, while Zhang and Vorobeychik investigated problems with only monotone cost functions~\cite{DBLP:conf/aaai/ZhangV16} as a more general case.

Recently, Pareto optimization approaches have been investigated for a wide range of subset selection problems~\cite{DBLP:journals/ec/FriedrichN15,DBLP:conf/nips/QianYZ15,DBLP:journals/ai/QianYTYZ19,DBLP:conf/ijcai/QianSYT17,qian2020multi}. 
It has been shown in~\cite{DBLP:conf/ijcai/QianSYT17} that an algorithm called POMC is able to achieve a $\phi= (\alpha_f/2)(1-\frac{1}{e^{\alpha_f}})$-approximation where $\alpha_f$ measures the closeness of the considered function $f$ to submodularity. The approximation matches the worst-case performance ratio of the generalized greedy algorithm~\cite{DBLP:conf/aaai/ZhangV16}. 

In this paper, we study monotone functions with a dynamic constraint where the constraint bound $B$ changes over time. Such constraint changes reflect real-world scenarios where resources vary during the process. We show that greedy algorithms have difficulties in adapting their current solutions after changes have happened. In particular, we show that there are simple dynamic versions of the classical knapsack problem where adding elements in a greedy fashion when the constraint bound increases over time can lead to an arbitrary bad performance ratio. For the case where constraint bounds decrease over time, we introduce a submodular graph covering problem and show that the considered adaptive generalized greedy algorithm may encounter an arbitrarily bad performance ratio on this problem.

Investigating POMC, we theoretically prove that this algorithm obtains for each constraint bound $b \in [0,B]$, a $\phi= (\alpha_f/2)(1-\frac{1}{e^{\alpha_f}})$-approximation efficiently. Furthermore, when relaxing the bound $B$ to $B^*>B$, $\phi$\nobreakdash-approximations for all values of $b \in [0, B^*]$ are obtained efficiently. The theoretical results for the performance of adaptive greedy algorithm and POMC, in addition to an initial experimental investigation on the quality of solutions found by these algorithms have been published in the proceedings of the Thirty-Third AAAI Conference on Artificial Intelligence (AAAI 2019)~\cite{DBLP:conf/aaai/RoostapourN0019}. 

We now describe how this article extends its conference version. We significantly expand our experiments by considering two real-world problems under two different constraint types and thirty dynamic benchmarks for each of them. We create a baseline by performing our best algorithms for one million generations and statistically compare the results based on offline errors. Moreover, we compare the performance of our algorithms with two additional algorithmic approaches, namely EAMC~\cite{bianefficient} and NSGA-II~\cite{DBLP:journals/tec/DebAPM02}. EAMC is the most recent in the literature that guarantees a polynomial expected time and NSGA-II is a well-known evolutionary multi-objective algorithm used in many applications.

In the first part of our extension, by benchmarking the generalized greedy algorithm, its adaptive version and POMC on the influence maximization problem in social networks over sequences of dynamic changes, we show that POMC obtains superior results to the generalized greedy and adaptive generalized greedy algorithms, specifically when the cost function depends on the structure of the graph. We consider the experimental results in four intervals to examine whether the algorithms behave differently during the optimization process. Here, POMC shows an even more prominent advantage as more dynamic changes are carried out because it is able to improve the quality of its solutions and cater for changes that occur in the future.

To compare these algorithms with EAMC and NSGA-II as the second part of the extension, we study the maximum coverage problem using two graph benchmarks with different sizes and two cost functions. For the ``random'' cost, we assign a random value to each node of the graph as their cost value. For the other cost function, called ``outdegree'', the cost is calculated based on the outdegree of each node. Presenting some explanatory examples, we describe the major problems that hold EAMC back from tracking optimal solutions in dynamic environments. Our results also show that EAMC, which limits its population size to guarantee the polynomial expected time, is not capable of dealing with dynamic changes and performs worse than the other algorithms. We also consider NSGA-II with two population sizes NSGA-II20 population size of 20 and NSGA-II100 with population size of 100. This algorithm that is known to be a high performing evolutionary multi-objective algorithm,  beats POMC when the constraint costs are random. However, NSGA-II20, because of its limited population size and also the effect of the crowding-distance factor in the selection procedure which forces the population to be distributed along the Pareto front, is inferior to POMC for the outdegree cost function. NSGA-II100, on the other hand, is more powerful when changes happen with medium or low frequencies. The effect of larger population size is seen in the experiments which shows NSGA-II20 tracks the optimal solution in high frequent dynamic environment better than NSGA-II100.

The paper is structured as follows. In the next section, we formulate the dynamic problems. In Section~\ref{sec:Theory}, we firstly introduce the algorithms that are studied theoretically and then show that the adaptive generalized greedy algorithm may encounter an arbitrary bad performance ratio even when starting with an optimal solution for the initial budget. In contrast, we show that POMC can maintain $\phi$\nobreakdash-approximation efficiently. Section~\ref{sec:Experiments} presents our experimental setting, the results, and the empirical analysis. We discuss our experimental investigations for influence maximization in social networks in Section~\ref{sec:InfMax}. Then we extend our investigations by considering NSGA-II20, NSGA-II100, and EAMC facing the benchmarks for maximum coverage problem in Section~\ref{sec:MaxCov}, followed by experimental demonstration on how the population size of POMC algorithm changes during one dynamic benchmark. Finally, we finish with some concluding remarks.

\section{Problem Formulation}
\label{sec2}

In this paper we consider optimization problems in which the cost function and objective functions are monotone and quantified according to their closeness to submodularity. 

Different definitions are given for submodularity~\cite{DBLP:journals/mp/NemhauserWF78} and we use the following one in this paper. For a given set $V=\{v_1,\cdots,v_n\}$, a function $f: 2^V\rightarrow\mathbb{R}$ is submodular if for any $X\subseteq Y\subseteq V$ and $v\notin Y$
\begin{align}\label{equ:submodular}
f(X\cup \{v\}) - f(X)\geq f(Y\cup \{v\})-f(Y)\text{.}
\end{align}

In addition, we consider how much a function is close to being submodular, measured by the submodularity ratio~\cite{DBLP:conf/aaai/ZhangV16}.
The function $f$ is $\alpha_f$-submodular where
\begin{align}\label{equ:submodularity}\alpha_f = \min_{X\subseteq Y,v\notin Y}\frac{f(X\cup \{v\}) - f(X)}{f(Y\cup \{v\})- f(Y)}\text{.}\end{align}
This definition is equivalent to the Equation~\ref{equ:submodular} when ${\alpha_f=1}$, i.e., we have submodularity in this case. Another notion which is used in our analysis is the curvature of function $f$. The curvature measures the deviation from linearity and reflects the effect of marginal contribution according to the function $f$~\cite{DBLP:journals/dam/ConfortiC84,vondrak2010submodularity}. For a monotone submodular function $f:2^V \rightarrow \mathbb{R}^+$,
$$\kappa_f=1-\min_{v\in V}\frac{f(V)-f(V\setminus \{v\})}{f(v)}$$
is defined as the total curvature of $f$.

In many applications the function to be optimized, $f$, comes with a cost function $c$ which is subject to a given cost constraint $B$. Often the cost function $c$ cannot be evaluated precisely. Hence, function $\hat{c}$ is used which is an $\alpha_c$-submodular $\psi$-approximation of cost $c$~\cite{DBLP:conf/aaai/ZhangV16}. To be more precise, according to submodularity definition (Equation \ref{equ:submodularity}), $\hat{c}$ is chosen such that $\alpha_c = \alpha_{\hat{c}}$ and   $c(X)\leq\hat{c}(X)\leq\psi c(X)$, where $n=|V|$. Moreover, according to the submodularity of $f$, the aim is to find a good approximation instead of finding an optimal solution.

Consider the static version of an optimization problem defined in~\cite{DBLP:conf/ijcai/QianSYT17}.
\begin{definition}[The Static Problem] \label{def:static version} Given a monotone objective function $f:2^V\rightarrow \mathbb{R}^+$, a monotone cost function $c:2^V\rightarrow\mathbb{R}^+$ and a budget $B$, the goal is to compute a solution $X$ such that
	$$X = \argmax_{Y\subseteq V}f(Y) \text{ s.t. } c(Y)\leq B\text{.} $$
\end{definition}

Following the investigation of the static case in~\cite{DBLP:conf/ijcai/QianSYT17}, we are interested in a $\phi$\nobreakdash-approximation where
$\phi=(\alpha_f/2)(1-\frac{1}{e^{\alpha_f}})$ depends on the submodularity ratio.

Zhang and Vorobeychik considered the performance of the generalized greedy algorithm~\cite{DBLP:conf/aaai/ZhangV16}, given in Algorithm~\ref{alg:GGA}, according to the approximated cost function $\hat{c}$. Starting from the empty set, the algorithm always adds the element with the largest objective to cost ratio that does not violate the given constraint bound $B$. 

\begin{algorithm}[t]
	\SetKwInOut{Input}{input}
	\Input{Initial
		budget constraint $B$.}
	$X\leftarrow\emptyset$\;
	$V^\prime \leftarrow V$\;
	\Repeat{$V^\prime \leftarrow \emptyset$}{$v^*\leftarrow \argmax_{v\in V^\prime}\frac{f(X\cup \{v\})-f(X)}{\hat{c}(X\cup \{v\})-\hat{c}(X)}$\;
		\If{$\hat{c}(X\cup \{v^*\})\leq B$}{$X\leftarrow X\cup \{v^*$\}\;}
		$V^\prime \leftarrow V^\prime\setminus \{v^*\}$\;}
	$v^* \leftarrow \argmax_{v\in V;\hat{c}(v)\leq B}f(v)$\;
	\Return{$\argmax_{S\in \{X,v^*\}}f(S)$}\

	\caption{Generalized Greedy Algorithm}\label{alg:GGA}
\end{algorithm}

Let $K_c = \max\{|X| : c(X)\leq B\}$. The optimal solution $\tilde{X}_B$ in these investigations is defined as $\tilde{X}_B = \arg \max\{f(X)\mid c(X)\leq\alpha_c \frac{B(1+\alpha_c^2(K_c-1)(1-\kappa_c)))}{\psi K_c}\}$ where $\alpha_c$ is the submodularity ratio of $c$. This formulation gives the value of an optimal solution for a slightly smaller budget constraint. The goal is to obtain a good approximation of $f(\tilde{X}_B)$ in this case.

It has been shown in~\cite{DBLP:conf/aaai/ZhangV16} that the generalized greedy algorithm, which adds the item with the highest marginal contribution to the current solution in each step, achieves a $(1/2)(1-\frac{1}{e})$-approximate solution if $f$ is monotone and submodular.~\cite{DBLP:conf/ijcai/QianSYT17} extended these results to objective functions with $\alpha_f$ submodularity ratio and proved that the generalized greedy algorithm obtains a $\phi=(\alpha_f/2)(1-\frac{1}{e^{\alpha_f}})$-approximation.
For the remainder of this paper, we assume $\phi=(\alpha_f/2)(1-\frac{1}{e^{\alpha_f}})$ and are interested in obtaining solutions that are $\phi$\nobreakdash-approximation for the considered problems.

In this paper, we study the dynamic version of problem given in Definition~\ref{def:static version}.

\begin{definition} [The Dynamic Problem]
	\label{defdyn}
	Let $X$ be a $\phi$\nobreakdash-approximation for the problem in Definition~\ref{def:static version}. The dynamic problem is given by a sequence of changes where in each change the current budget $B$ changes to $B^*=B+d$, $d \in \mathds{R}_{\geq - B}$. The goal is to compute a $\phi$\nobreakdash-approximation $X^\prime$ for each newly given budget $B^*$.
\end{definition}

The Dynamic Problem evolves over time by the changing budget constraint bounds. Note that every fixed constraint bound gives a static problem and any good approximation algorithm can be run from scratch for the newly given budget. However, the main focus of this paper are algorithms that can adapt to changes of the constraint bound.

\section{Theoretical Analysis}\label{sec:Theory}
POMC and GGA are proven to find a $\phi$\nobreakdash-approximated solution on the static version of submodular subset selection problems. In this section, we analyze their performance in a dynamic environment. We first consider an extended version of GGA, which is an adaptive version to perform in the dynamic environment. Then we prove the power of POMC in computing $\phi$\nobreakdash-approximation in static and dynamic versions.

\subsection{Algorithms}\label{Sec:Algs}

We consider dynamic problems according to Definition~\ref{defdyn} with $\phi= (\alpha_f/2)(1-\frac{1}{e^{\alpha_f}})$ and are interested in algorithms that adapt their solutions to the new constraint bound $B^*$ and obtain a $\phi$\nobreakdash-approximation for the new bound $B^*$. As the generalized greedy algorithm can be applied to any bound $B$, the first approach would be to run it for the newly given bound $B^*$. However, this might lead to totally different solutions and adaptation of already obtained solutions might be more beneficial. Furthermore, adaptive approaches that change the solution based on the constraint changes are of interest as they might be faster in obtaining such solutions and/or be able to learn good solutions for the different constraint bounds that occur over time.

\begin{algorithm}[t]
	\SetKwInOut{Input}{input}
	Let $B^*$ be the new budget\;
	\uIf {$B^*<B$}{
		\While {$\hat{c}(X)> B^*$}{
			$v^*\leftarrow \argmin_{v\in X} \frac{ f(X)-f(X\setminus\{v\})}{\hat{c}(X)-\hat{c}(X\setminus \{v\})}$\;
			$X \leftarrow X\setminus \{v^*\}$ \;}
	}
	\ElseIf{$B^*> B$}{
		$V^\prime \leftarrow V\setminus X$\;
		\Repeat{$V^\prime \leftarrow \emptyset$}{
			$v^*\leftarrow \argmax_{v\in V^\prime}\frac{f(X\cup \{v\})-f(X)}{\hat{c}(X\cup \{v\})-\hat{c}(X)}$\;
			\If{$\hat{c}(X\cup \{v^*\})\leq B^*$}{$X\leftarrow X\cup \{v^*$\}\;}
			$V^\prime \leftarrow V^\prime\setminus \{v^*\}$\;}
		
	}
	$v^* \leftarrow \argmax_{v\in V;\hat{c}(v)\leq B^*}f(v)$ \label{alg_line:vstar}\;
	\Return{$\argmax_{S\in \{X,v^*\}}f(S)$}\
	
	\caption{Adaptive Generalized Greedy Algorithm}\label{alg:AGA}
	
\end{algorithm}

Based on the generalized greedy algorithm, we introduce the adaptive generalized greedy algorithm (Algorithm \ref{alg:AGA}). This algorithm is modified from Algorithm~\ref{alg:GGA} in a way that enables it to deal with a dynamic change. Let $X$ be the current solution of the algorithm. When a dynamic change decreases the budget constraint, the algorithm removes items from $X$ according to their marginal contribution, until it achieves a feasible solution. When there is a dynamic increase, this algorithm behaves similarly to the generalized greedy algorithm.

Furthermore, we consider the Pareto optimization approach POMC (Algorithm~\ref{alg:POMC}) which is also known as Global SEMO in the evolutionary computation literature~\cite{DBLP:journals/tec/LaumannsTZ04,DBLP:journals/ec/FriedrichHHNW10,DBLP:journals/ec/FriedrichN15}. POMC is a multi-objective optimization approach which is proven to perform better than the generalized greedy algorithm in case of local optima~\cite{DBLP:conf/ijcai/QianSYT17}. We reformulate the problem as a bi-objective problem in order to use POMC as follows:
\begin{flalign}\nonumber
&\hspace{1.5cm}\argmax_{X\in \{0,1\}^n} (f_1(X),f_2(X))\text{,}\\\nonumber
&\text{where:  }\\\nonumber
& f_1(X)= 
\begin{cases}
-\infty,  &  \hat{c}(X) > B+1 \\
f(X),  & \text{otherwise}
\end{cases}
,f_2(X)=-\hat{c}(X)\text{.}
\end{flalign}

\begin{algorithm}[t]
	\SetKwInOut{Input}{input}
	\Input{Initial
		budget constraint $B$, time $T$}
	$X\leftarrow \{0\}^n$\;
	Compute $(f_1(X), f_2(X))$\;
	$P\leftarrow \{x\}$\;
	$t\leftarrow 0$\;
	\While{$t<T$}{
		Select $X$ from $P$ uniformly at random\;
		$X^\prime \leftarrow$ flip each bit of $X$ with probability $\frac{1}{n}$\;
		Compute $(f_1(X^\prime), f_2(X^\prime))$\;
		\If{$\nexists Z\in P$ such that $Z\succ X^\prime$}{
			$P\leftarrow(P\setminus \{Z\in P\mid X^\prime\succeq Z\})\cup\{X^\prime\}$\;
		}
		$t=t+1$\;
	}
	\Return{$\argmax_{X\in P:\hat{c}(X)\leq B}f(x)$}

	\caption{POMC Algorithm}\label{alg:POMC}
\end{algorithm}

This algorithm optimizes the cost function and the objective function simultaneously. To achieve this, it uses the concept of dominance to compare two solutions. Solution $X_1$ dominates $X_2$, denoted by $X_1\succeq X_2$, if $f_1(X_1)\geq f_1(X_2) \land f_2(X_1)\geq f_2(X_2)$. The dominance is strict, $\succ$, when at least one of the inequalities is strict. POMC produces a population of non-dominated solutions and optimizes them during the optimization process. In each iteration, it chooses a solution $X$ randomly from the population and flips each bit of the solution with the probability of $1/n$. It adds the mutated solution $X^\prime$ to the population only if there is no solution in the population that strictly dominates $X^\prime$. All the solutions which are dominated by $X^\prime$ will be deleted from the population afterward.

Note that we only compute the objective vector $(f_1(X), f_2(X))$ when the solution $X$ is created. This implies that the objective vector is not updated after changes to the constraint bound $B$. As a consequence, solutions whose constraint exceeds the value of $B+1$ for a newly given bound are kept in the population. However, newly produced individuals exceeding $B+1$ for the current bound $B$ are not included in the population as they are dominated by the initial search point $0^n$. We are using the value $B+1$ instead of $B$ in the definition of $f_1$ as this gives the algorithm some look ahead for larger constraint bounds. However, every value of at least $B$ would work for our theoretical analyses. The only drawback would be a potentially larger population size which influences the value $P_{\max}$ in our runtime bounds.

\subsection{Adaptive Generalized Greedy Algorithm}\label{sec:AdGGA}
In this section we analyze the performance of the adaptive generalized greedy algorithm. This algorithm is a modified version of the generalized greedy using the same principle in adding and deleting items. However, in this section we prove that the adaptive generalized greedy algorithm is not able to deal with the dynamic change, i.e., the approximation obtained can become arbitrarily bad during a sequence of dynamic changes.

In order to show that the adaptive generalized greedy algorithm can not deal with dynamic increases of the constraint bound, we consider a special instance of the classical knapsack problem. Note that the knapsack problem is special submodular problem where both the objective and the cost function are linear.

Given $n+1$ items $e_i=(c_i,f_i)$ with cost $c_i$ and value $f_i$ independent of the choice of the other items, we assume there are items $e_i= (1,\frac{1}{n})$, $1\leq i\leq n/2$, $e_i= (2,1)$, $n/2+1\leq i\leq n$, and a special item $e_{n+1} = (1,3)$. We have $f_{inc}(X) = \sum_{e_i \in X} f_i$ and $c_{inc}(X) = \sum_{e_i \in X} c_i$ as the linear objective and constraint function, respectively.

\begin{theorem}
	Given the dynamic knapsack problem $(f_{inc}, c_{inc})$, starting with $B=1$ and increasing the bound $n/2$ times by $1$, the adaptive generalized greedy algorithm computes a solution that has approximation ratio $O(1/n)$.
\end{theorem}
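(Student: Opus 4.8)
The plan is to track the solution maintained by the adaptive generalized greedy algorithm through all $n/2$ budget increases, show it never acquires a cost-$2$ item (so it collects only near-worthless unit-cost items), and compare its final value, which is $\Theta(1)$, with the value $\Theta(n)$ of a near-optimal solution for the final budget $1+n/2$.

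First I would fix the starting point. For $B=1$ every nonempty feasible set is a singleton: either $\{e_i\}$ with $i\le n/2$, of value $1/n$, or $\{e_{n+1}\}$, of value $3$, because any two items together cost at least $2$. Hence the optimum for $B=1$ is $\{e_{n+1}\}$ with value $3$, and since $f_{inc}$ is linear we have $\phi=(1/2)(1-1/e)>1/4$; as $1/n<3\phi$ for every $n\ge 2$, the only $\phi$-approximation for $B=1$---equivalently the output of the generalized greedy algorithm, which first adds the ratio-$3$ item $e_{n+1}$ and can then add nothing more---is $X=\{e_{n+1}\}$. So the dynamic process starts from $X=\{e_{n+1}\}$, of cost $1$.

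Next I would show by induction on $k\in\{0,1,\dots,n/2\}$ that after the $k$-th increase (the current bound being $k+1$) the algorithm's solution is $e_{n+1}$ together with exactly $k$ unit-cost items, of cost exactly $k+1$. For the inductive step the current cost is $k+1$ and the new bound $k+2$; since $f_{inc}$ and $c_{inc}$ are linear, the marginal objective-to-cost ratio of any free item equals $f_i/c_i$, which is $1/2$ for each cost-$2$ item and $1/n$ for each unit-cost item. Because $1/2>1/n$, the algorithm examines and removes from $V'$ all cost-$2$ items before it ever considers a unit-cost item, and it rejects each of them since adding it would give cost $(k+1)+2>k+2$; it then adds exactly one unit-cost item, reaching cost $k+2$, and rejects all remaining unit-cost items. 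After the $(n/2)$-th increase the solution is $\{e_{n+1},e_1,\dots,e_{n/2}\}$ with value $f_{inc}=3+(n/2)(1/n)=7/2$, and since the best feasible singleton $e_{n+1}$ has value $3<7/2$, the algorithm outputs a solution of value $7/2=\Theta(1)$.

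Finally I would lower-bound the reference optimum $\tilde{X}_B$ for $B=1+n/2$. For linear $c_{inc}$ we have $\alpha_c=1$, $\kappa_c=0$ and $K_c=n/2+1$ (the $n/2$ unit-cost items together with $e_{n+1}$), so the reduced budget appearing in the definition of $\tilde{X}_B$ equals $B(1+(K_c-1))/K_c=B$; thus $\tilde{X}_B$ is the genuine optimum for budget $1+n/2$. The set consisting of $e_{n+1}$ and $\lfloor n/4\rfloor$ cost-$2$ items is feasible (cost at most $1+n/2$) and has value at least $n/4$, so $f_{inc}(\tilde{X}_B)\ge n/4$. Hence the approximation ratio of the algorithm's output is at most $\frac{7/2}{n/4}=O(1/n)$, as claimed. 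The only point needing care is the inductive step: one must check that a cost-$2$ item, once examined, is permanently discarded from $V'$, and that the strict inequality $1/2>1/n$ (valid for $n\ge 3$) prevents ties in the $\argmax$. This bookkeeping is routine, so there is no substantial obstacle.
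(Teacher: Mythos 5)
Your proposal is correct and follows essentially the same route as the paper: start from $\{e_{n+1}\}$ at $B=1$, observe that each unit budget increase leaves too little slack to admit a cost-$2$ item so only the value-$1/n$ items get added, and compare the resulting value $7/2$ against an optimal solution of value $\Theta(n)$ built from $e_{n+1}$ and $n/4$ cost-$2$ items. Your version merely makes explicit some bookkeeping the paper leaves implicit (the induction over increases, the final singleton comparison, and the fact that $\tilde{X}_B$ coincides with the true optimum for this linear instance).
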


\begin{proof}
	For the constraint $B=1$ the optimal solution is $\{e_{n+1}\}$. Now let there be $n/2$ dynamic changes where each of them increases $B$ by $1$. In each change, the algorithm can only pick an item from $\{e_1,\cdots,e_{n/2}\}$, otherwise it violates the budget constraint.
	After $n/2$ changes, the budget constraint is $1+n/2$ and the result of the algorithm is $S=\{e_{n+1},e_1,\cdots,e_n/2\}$ with $f(S)=3 +(n/2) \cdot (1/n)= 7/2$ and $c(S) = 1+n/2$. However, an optimal solution for budget $1+n/2$ is $S^* = \{e_{n+1},e_{n/2+1},\ldots, e_{\frac{3n}{4}}\}$ with $f(S^*)=3+\frac{n}{4}$. Hence, the approximation ratio in this example is 
	$(7/2)/(3+n/4) = O(1/n)$. 
\end{proof}

Now we consider the case where the constraint bound decreases over time and show that the adaptive generalized greedy algorithm may also encounter situations where the approximation ratio becomes arbitrarily bad over time.

We consider the following \emph{Graph Coverage Problem}. Let $G=(U,V,E)$ be a bipartite graph with bipartition $U$ and $V$ of vertices with $|U|=n$ and $|V|=m$. The goal is to select a subset $S \subseteq U$ with $|S| \leq B$ such that the number of neighbors of $S$ in $V$ is maximized. Note that the objective function $f(S)$ measuring the number of neighbors of $S$ in $V$ is monotone and submodular.

\begin{figure}
	\centering
	\includegraphics[width=.44\textwidth]{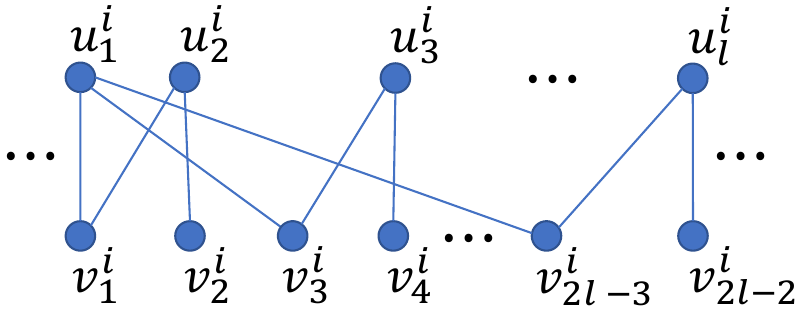}
	\caption{Single subgraph $G_i$ of $G = (U,V,E)$}
	\label{fig:example}
\end{figure}
We consider the graph $G = (U,V,E)$ which consists of $k$ disjoint subgraphs 
$$G_i=(U_i=\{u_1^i,\cdots,u_l^i\},V_i=\{v_1^i,\cdots,v_{2l-2}^i\},E_i)$$
(see Figure~\ref{fig:example}).
Node $u_1^i$ is connected to nodes $v_{2j-1}^i$, $1\leq j\leq l-1$. Moreover, each vertex $u_j^i$, $2 \leq j\leq l$ is connected to two vertices $v_{2j-3}^i$ and $v_{2j-2}^i$. We assume that $k = \sqrt{n}$ and $l= n/k =\sqrt{n}$.

\begin{theorem} \label{thm:adapt_dyn}
	Starting with the optimal set $S=U$ and budget $B=n$, there is a specific sequence of dynamic budget reductions such that the solution obtained by the adaptive generalized greedy algorithm
	has the approximation ratio $O(1/\sqrt{n})$.
\end{theorem}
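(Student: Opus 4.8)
The plan is to exploit the fact that, in every subgraph $G_i$, the ``star centre'' $u_1^i$ covers a large set of vertices (the $l-1$ odd-indexed ones $v_1^i,v_3^i,\dots,v_{2l-3}^i$) yet is \emph{completely redundant} as long as all of $U_i$ is present, because each odd-indexed vertex $v_{2j-1}^i$ is also covered by $u_{j+1}^i$. First I would record the covering structure precisely: in $V_i$ each of the $l-1$ odd-indexed vertices is covered by $u_1^i$ and by exactly one node among $u_2^i,\dots,u_l^i$, while each of the $l-1$ even-indexed vertices $v_2^i,\dots,v_{2l-2}^i$ is covered by exactly one node among $u_2^i,\dots,u_l^i$. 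Consequently $f(U)=|V|=k(2l-2)=2n-2k$, and with $S=U$ the marginal contribution $f(S)-f(S\setminus\{u_1^i\})$ of each star centre is $0$, whereas the marginal contribution of every $u_j^i$ with $j\ge 2$ is at least $1$. A second structural fact I would establish: once all star centres have been deleted, the nodes $\{u_j^i: j\ge 2\}$ cover pairwise disjoint pairs of vertices, so each of them then has marginal contribution exactly $2$, independently of which other nodes remain; in particular the greedy never again finds a ``cheap'' element to delete.

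Next I would specify the adversarial sequence: reduce the budget one unit at a time from $B=n$ down to $B=k=\sqrt n$, so $n-k$ deletions occur in total. By the first structural fact, the adaptive generalized greedy algorithm always deletes an element of smallest marginal loss, i.e.\ a star centre with loss $0$, and deleting one star centre leaves the others redundant; hence the first $k$ deletions remove exactly $u_1^1,\dots,u_1^k$ at no cost to the coverage, and at budget $n-k$ the solution still covers all of $V$. The remaining $n-2k$ deletions must each remove some $u_j^i$ with $j\ge 2$, each costing exactly $2$ covered vertices by the second structural fact. Therefore the final greedy solution $S_{\mathrm{greedy}}$ satisfies $f(S_{\mathrm{greedy}})=(2n-2k)-2(n-2k)=2k=2\sqrt n$; and since the best single feasible element covers only $l-1=\sqrt n-1<2\sqrt n$ vertices, the tie-breaking step $\argmax_{S\in\{X,v^*\}}f(S)$ does not improve this.

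Finally I would lower-bound the optimum at budget $B=k$: selecting the $k$ star centres $u_1^1,\dots,u_1^k$ is feasible and, as the subgraphs are vertex-disjoint, covers $k(l-1)=n-\sqrt n$ vertices, so $f(\tilde X_B)\ge n-\sqrt n$ (here the reference value $\tilde X_B$ coincides with the true optimum because the cardinality cost is modular and evaluated exactly, giving $\alpha_c=1$, $\kappa_c=0$, $\psi=1$). Hence the approximation ratio is $f(S_{\mathrm{greedy}})/f(\tilde X_B)\le 2\sqrt n/(n-\sqrt n)=2/(\sqrt n-1)=O(1/\sqrt n)$. The main obstacle is the careful bookkeeping of marginal contributions during the deletion phase: one must verify that the greedy, flush with budget at the start, is \emph{forced} to discard precisely the star centres that become indispensable once the budget is tight, and that after they are gone no deletion is ever cheap again, so the coverage loss accumulates linearly in the number of remaining reductions.
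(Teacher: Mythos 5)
Your proposal is correct and follows essentially the same argument as the paper's proof: the same budget sequence decreasing from $n$ to $k$ one unit at a time, the same observation that the $k$ star centres are deleted first at zero marginal loss and each subsequent deletion costs exactly two covered vertices, and the same comparison of the resulting value $2\sqrt{n}$ against the optimum $k(l-1)=n-\sqrt{n}$. Your additional bookkeeping (the disjoint-pairs structure, the check that the singleton tie-breaking step cannot help, and the remark that $\tilde{X}_B$ equals the true optimum for a cardinality cost) only makes explicit what the paper leaves implicit.
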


\begin{proof}

	Let the adaptive generalized greedy algorithm be initialized with $X = U$ and $B = n=kl$. We assume that the budget decreases from $n$ to $k$ where each single decrease reduces the budget by $1$.
	In the first $k$ steps, to change the cost of solution from $n$ to $n-k$, the algorithm removes the nodes $u_1^i$, $1\leq i\leq k$, as they have a marginal contribution of $0$. Following these steps, all the remaining nodes have the same marginal contribution of $2$. The solution $X$ of size $k$ obtained by the removal steps of the adaptive generalized greedy algorithm contains $k$ vertices which are connected to $2k$ nodes of $V$, thus $f(X)=2k=2\sqrt{n}$. Such a solution is returned by the algorithm for $B=k$ as the most valuable single node has at most $(l-1)=(\sqrt{n}-1)$ neighbors in $V$.
	For $B=k$, the optimal solution $X^* = \{u_1^i \mid 1\leq i\leq k\}$ has $f(X^*) = k(l-1)=n- \sqrt{n}$. Therefore, the approximation ratio achieved by the adaptive generalized greedy algorithm is upper bounded by
	$(2\sqrt{n})/(n-\sqrt{n})= O(1/\sqrt{n}).$
\end{proof}

\subsection{Pareto Optimization}
In this section we analyze the behavior of POMC facing a dynamic change. According to Lemma 3 in~\cite{DBLP:conf/ijcai/QianSYT17}, for any $X\subseteq V$ and $v^*= \argmax_{v\notin X}\frac{f(X\cup \{v\})-f(X)}{\hat{c}(X\cup \{v\})-\hat{c}(X)}$ we have $$f(X\cup \{v^*\})-f(X)\geq\alpha_f\frac{\hat{c}(X\cup \{v^*\})-\hat{c}(X)}{B}\cdot (f(\tilde{X})-f(X))\text{.}$$ 

We denote by
$\delta_{\hat{c}}=\min\{\hat{c}(X\cup \{v\})-\hat{c}(X)\mid X\subseteq V,v\notin X\}$
the smallest contribution of an element to the cost of a solution for the given problem. Moreover, let $P_{\max}$ be the maximum size of POMC's population during the optimization process.

The following theorem considers the static case and shows that POMC computes a $\phi$\nobreakdash-approximation efficiently for every budget $b \in [0,B]$.

\begin{theorem}\label{thm:POMC_running_time}
	Starting from $\{0\}^n$, POMC computes, for any budget $b \in [0,B]$, a $\phi=(\alpha_f/2)(1-{1}/{e^{\alpha_f}})$-approximate solution after $T=cnP_{\max}\cdot \frac{B}{\delta_{\hat{c}}}$ iterations with the constant probability, where $c\geq 8e+1$ is a sufficiently large arbitrary constant.
\end{theorem}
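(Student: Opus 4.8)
The plan is to track a single "greedy-like" solution as it is built up inside POMC's population, mimicking the analysis of the generalized greedy algorithm but paying for each improving step with the expected waiting time of the evolutionary algorithm. The key invariant to maintain is: for every $j \ge 0$, at some point during the run the population contains a solution $X$ with $\hat c(X) \le b$ whose value $f(X)$ is at least what the generalized greedy achieves after $j$ of its marginal-gain steps (relative to budget $b$). Concretely, using Lemma~3 of~\cite{DBLP:conf/ijcai/QianSYT17} quoted just above — that adding the best ratio element $v^*$ to $X$ yields $f(X\cup v^*)-f(X)\ge \alpha_f\,\frac{\hat c(X\cup v^*)-\hat c(X)}{B}\,(f(\tilde X)-f(X))$ — one shows by the standard induction that after $K_c$ such steps one reaches a set $X$ with $\hat c(X)\le b$ and $f(X)\ge \alpha_f(1-1/e^{\alpha_f})\,f(\tilde X_b)$, or else a budget-violating step occurs whose preceding set already has value $\ge (1/2)(1-1/e^{\alpha_f})\,f(\tilde X_b)$; combining with the best single element $v^*$ gives the $\phi$-approximation. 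This is exactly the generalized-greedy argument; POMC's job is only to realize the needed chain of solutions in its population.

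Next I would bound the time for POMC to traverse one link of this chain. Suppose the population currently contains the "stage-$j$" solution $X$. Since POMC never removes a solution unless a solution dominating it arrives, and a dominating solution is at least as good on both objectives, the population always retains a solution that is at least as good as $X$ in the relevant sense; so it suffices to wait until $X$ (or a replacement at least as good) is selected and the right single bit is flipped. Selection happens with probability $1/|P|\ge 1/P_{\max}$, and flipping exactly the one bit corresponding to $v^*$ while leaving the rest unchanged has probability $\frac{1}{n}(1-\frac1n)^{n-1}\ge \frac{1}{en}$. Hence one improving step takes expected time at most $enP_{\max}$. The number of links needed is at most the number of greedy steps plus one, i.e.\ $O(B/\delta_{\hat c})$, because each added element raises the cost by at least $\delta_{\hat c}$ and we stop once the cost would exceed $b\le B$. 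Multiplying, the total expected time is $O\!\left(nP_{\max}\cdot \frac{B}{\delta_{\hat c}}\right)$, and a Markov-inequality (or a standard repetition/phase) argument upgrades this to "with constant probability within $T=cnP_{\max}\frac{B}{\delta_{\hat c}}$ iterations" for $c\ge 8e+1$.

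The main obstacle — and the place needing the most care — is arguing that the chain of intermediate solutions is genuinely \emph{persistent} in the population: when we are waiting to extend the stage-$j$ solution, it might be dominated and deleted. The resolution is that it can only be deleted by a solution $Y$ with $f_1(Y)\ge f_1(X)$ and $\hat c(Y)\le \hat c(X)\le b$, i.e.\ by something at least as useful for continuing the greedy chain; so one should phrase the invariant in terms of "there exists a population member $X$ with $\hat c(X)\le b$ and $f(X)\ge g_j$" for the appropriate threshold $g_j$, rather than tracking a fixed bit-string. One subtlety here is that $f_1$ uses the cutoff $B+1$, not $b$; but since $b\le B<B+1$, any solution with $\hat c(X)\le b$ has $f_1(X)=f(X)$, so the dominance relation behaves as needed on the relevant part of the population. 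A second minor point is handling the "budget-violating step" case of the greedy analysis: the relevant bound there is on $f$ of the set \emph{before} the violating element is added, which is itself a stage-$j$ solution already in the population, so no extra work is required. Once persistence is established, the time bound and probability amplification are routine.
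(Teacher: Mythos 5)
Your overall architecture matches the paper's: maintain a greedy-style invariant inside the population (a solution $X$ with $\hat{c}(X)\le i$ and $f(X)\ge (1-(1-\alpha_f\frac{i}{bk})^k)f(\tilde{X}_b)$), pay $enP_{\max}$ expected iterations per link of the chain via select-then-single-bit-flip, use at most $b/\delta_{\hat{c}}$ links, and close with the $f(X\cup v^*)\le (f(X)+f(z))/\alpha_f$ trick to extract the $\phi$-approximation. Your discussion of persistence (a stage-$j$ solution can only be displaced by a dominating solution that is at least as useful for continuing the chain, and $f_1=f$ on the region $\hat{c}\le b\le B$) is a point the paper passes over with one sentence, and it is worth making explicit.

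The genuine gap is in the final probability amplification. The theorem asserts a $\phi$-approximation for \emph{every} $b\in[0,B]$ within the same $T$ iterations (this simultaneous form is exactly what the subsequent dynamic-case theorem consumes as its hypothesis), and the chains for different budgets are different: they target different $f(\tilde{X}_b)$, terminate at different lengths, and may require selecting different population members, so a union bound over the up to $B/\delta_{\hat{c}}$ relevant budget values is unavoidable. A plain Markov argument gives failure probability only about $e/c$ per budget, which does not survive that union bound. The paper instead counts successes with a Chernoff bound for moderately correlated variables (Lemma~1 of Doerr, Happ and Klein, needed because whether an iteration is a ``success'' depends on the current population), obtaining per-budget failure probability at most $e^{-B/\delta_{\hat{c}}}$ and hence overall failure probability at most $\frac{B}{\delta_{\hat{c}}}e^{-B/\delta_{\hat{c}}}\le 1/e$, a constant. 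Your parenthetical ``standard repetition/phase argument'' would, if carried out (phases of length $2enP_{\max}$, success probability $\ge 1/2$ per phase, Chernoff on the phase count), also yield the required exponential concentration and sidestep the correlation issue entirely; but as written you neither perform it nor take the union bound over $b$, and this is precisely where the constant $c\ge 8e+1$ earns its keep. A secondary slip: in your restatement of the greedy endgame the constants are off (the guaranteed value after the chain is $(1-1/e^{\alpha_f})f(\tilde{X}_b)$ for the possibly infeasible $X\cup v^*$, and the conclusion is $\max\{f(X),f(z)\}\ge(\alpha_f/2)(1-1/e^{\alpha_f})f(\tilde{X}_b)$, with the factor $\alpha_f$ appearing in the final bound, not in the intermediate one), though this does not affect the structure of the argument.
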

\begin{proof}
	We first consider the number of iterations to find a $(\alpha_f/2)\left(1-(1-\frac{\alpha_f}{k})^k\right)$-approximate solution for a budget $b\in [0,B]$ and some $k$. We consider the largest value of $i$ for which there is a solution $X$ in the population where $\hat{c}(X)\leq i<b$ and $$f(X)\geq \left(1-\left(1-\alpha_f \frac{i}{bk}\right)^k\right)\cdot f({\tilde{X}}_b)$$ holds for some $k$. Initially, it is true for $i=0$ with $X = \{0\}^n$. 
	We now show that adding $v^*$ to the current solution has the desired contribution to achieve a $\phi$\nobreakdash-approximate solution. Let $X\subseteq V$ and $v^*= \argmax_{v\notin X}\frac{f(X\cup \{v\})-f(X)}{\hat{c}(X\cup \{v\})-\hat{c}(X)}$. 
	
	Assume that 
	$$f(X)\geq \left(1-\left(1-\alpha_f \frac{i}{bk}\right)^k\right)\cdot f({\tilde{X}}_b)$$ holds for some $\hat{c}(X)\leq i<b$ and $k$. 
	Then adding $v^*$ leads to
	\begin{align}\nonumber
	&f(X\cup \{v^*\})\geq\\\nonumber
	&\left(1-\left(1-\alpha_f \frac{i+\hat{c}(X\cup \{v^*\})-\hat{c}(X)}{b(k+1)}\right)^{k+1}\right)\cdot f({\tilde{X}}_b).
	\end{align}
	This process only depends on the quality of $X$ and is independent of its structure.
	Starting from $\{0\}^n$, if the algorithm carries out such steps at least $b/\delta_{\hat{c}}$ times, it reaches a solution $X$ such that
	\begin{align}\nonumber
	f(X\cup \{v^*\})&\geq\left(1-\left(1-\alpha_f \frac{b}{bk^*}\right)^{k^*}\right)\cdot f({\tilde{X}}_b)\\\nonumber
	&\geq \left(1-\frac{1}{e^{\alpha_f}}\right)\cdot f({\tilde{X}}_b)\text{.}
	\end{align}
	Considering item $z= \argmax_{v \in V:\hat{c}(v)\leq b}f(v)$, by submodularity and $\alpha_f\in[0,1]$ we have
	$
	f(X\cup \{v^*\}) \leq (f(X)+f(z))/\alpha_f.
	$

	This implies that 
	$$\max\{f(X),f(z)\}\geq(\alpha_f/2)\cdot(1-\frac{1}{e^{\alpha_f}})\cdot f({\tilde{X}}_b)\text{.}$$
	
	We consider $T=cnP_{\max} B/\delta_{\hat{c}}$ iterations of the algorithm and analyze the success probability within $T$ steps. To have a successful mutation step where $v^*$ is added to the currently best approximate solution, the algorithm has to choose the right individual in the population, which happens with probability at least $1/P_{\max}$. Furthermore, the single bit corresponding to $v^*$ has to be flipped which has the probability at least $1/(en)$.
	We call such a step a success. 
	Let random variable $Y_j = 1$ when there is a success in iteration $j$ of the algorithm and $Y_j = 0$, otherwise. Thus, we have 
	$$\Pr(Y_j=1)\geq\frac{1}{en}\cdot\frac{1}{P_{\max}}$$
	as long as a $\phi$\nobreakdash-approximation for bound $b$ has not been obtained.
	
	Furthermore, let $Y_i^*$, $1 \leq i \leq T$, be mutually independent random binary variables with $\Pr[Y^*_i=1] = \frac{1}{enP_{\max}}$ and $\Pr[Y^*_i=0] = 1-\frac{1}{enP_{\max}}$. For the expected value of the random variable $Y^*=\sum_{j=1}^{T}Y^*_j$ we have
	$$E[Y^*]=\frac{T}{enP_{\max}} = \frac{cB}{e \delta_{\hat{c}}} \geq \frac{cb}{e \delta_{\hat{c}}}\text{.}$$
	We use Lemma 1 in~\cite{DBLP:journals/ec/DoerrHK11} for moderately correlated variables which allows the use of the following Chernoff bound
	\begin{align}\label{equ:chernoff}\nonumber
	\Pr\left(Y<(1-\delta)E[Y^*]\right)&\leq \Pr\left(Y^*<(1-\delta)E[Y^*]\right)\\
	&\leq e^{-E[Y^*]\delta^2/2}\text{.}
	\end{align}
	
	Using Equation~\ref{equ:chernoff} with $\delta = (1-\frac{e}{c})$, we bound the probability of not finding a $\phi$\nobreakdash-approximation of ${\tilde{X}}_b$ in time $T=cnP_{\max}{B}/{\delta_{\hat{c}}}$ by
	\begin{align}\nonumber
	\Pr(Y\leq \frac{b}{\delta_{\hat{c}}})&\leq e^{-\frac{(c-e)^2 B}{2ce\delta_{\hat{c}}}} \leq e^{-\frac{(c/2)^2 B}{2ce\delta_{\hat{c}}}}\\\nonumber
	&\leq e^{-\frac{cB}{8e\delta_{\hat{c}}}}\leq e^{-\frac{B}{\delta_{\hat{c}}}}\text{.}
	\end{align}
	
	Using the union bound and taking into account that there are at most $B/\delta_{\hat{c}}$ different values for $b$ to consider, the probability that there is a $b \in [0,B]$ for which no $\phi$\nobreakdash-approximation has been obtained is upper bounded by $\frac{B}{\delta_{\hat{c}}}\cdot e^{-\frac{B}{\delta_{\hat{c}}}}$.
	
	This implies that POMC finds a $(\alpha_f/2)(1-\frac{1}{e^{\alpha_f}})$-approximate solution with probability at least
	$1-\frac{B}{\delta_{\hat{c}}} \cdot e^{-\frac{B}{\delta_{\hat{c}}}}$ for each $b \in [0,B]$. 
\end{proof}
Note that if we have $B/\delta_{\hat{c}}\geq\log{n}$ then the probability of achieving a $\phi$\nobreakdash-approximation for every $b \in [0,B]$ is $1-o(1)$. In order to achieve a probability of $1-o(1)$ for any possible change, we can run the algorithm for $T'=cnP_{\max}\cdot \max\{\frac{B}{\delta_{\hat{c}}}, \log n\}$, $c\geq 8e+1$, iterations.

Now we consider the performance of POMC in the dynamic version of the problem. In this version, it is assumed that POMC has achieved a population which includes a $\phi$\nobreakdash-approximation for all budgets $b\in [0,B]$. Reducing the budget from $B$ to $B^*$ implies that a $\phi$\nobreakdash-approximation for the newly given budget $B^*$ is already contained in the population.

Consideration must be given to the case where the budget increases.
Assume that the budget changes from $B$ to $B^*=B+d$ where $d>0$. We analyze the time until POMC has updated its population such that it contains for any $b \in [0,B^*]$ a $\phi$\nobreakdash-approximate solution.

We define 
\begin{flalign*}\nonumber
I_{\max}(b,b^\prime)&=\max\{ i\in[0,b]\mid \exists X\in P,\hat{c}(X)\leq i \\\nonumber
&\land f(X)\geq \left(1-\left(1-\alpha_f \frac{i}{bk}\right)^k\right)\cdot f({\tilde{X}}_b)\\\nonumber
&\land f(X)\geq \left(1-\left(1-\alpha_f \frac{i}{b^\prime k^\prime}\right)^{k^\prime}\right)\cdot f({\tilde{X}}_{b^\prime}) \}\nonumber
\end{flalign*}
for some $k$ and $k^\prime$. The notion of $I_{\max}(b,b^\prime)$ enables us to correlate progress in terms of obtaining a $\phi$\nobreakdash-approximation for budgets $b$ and $b^\prime$.
When increasing the budget from $B$ to $B^*$, it allows us to define the solution in the current population $P$ to which we can add an element in order to obtain good approximations for the budgets $b' \in (B,B^*]$.

\begin{theorem}
	Let POMC have a population $P$ such that, for every budget $b\in [0,B]$, there is a $\phi$\nobreakdash-approximation in $P$.
	After changing the budget to $B^*>B$, POMC has computed a $\phi$\nobreakdash-approximation with probability $\Omega(1)$ within $T = cnP_{\max}\frac{d}{\delta_{\hat{c}}}$ steps for every $b\in [0,B^*]$, where $c\geq 8e+1$ is a sufficiently large arbitrary constant.
	
\end{theorem}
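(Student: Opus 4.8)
The plan is to re-run the greedy-chain argument from the proof of Theorem~\ref{thm:POMC_running_time}, but to start it from the progress POMC has already accumulated for budgets in $[0,B]$, so that only $d/\delta_{\hat c}$ further successful steps are needed rather than $B^{*}/\delta_{\hat c}$. First I would dispose of the budgets $b\in[0,B]$: objective vectors of individuals already in $P$ are never recomputed after a change, $f(\tilde{X}_b)$ is unchanged for $b\le B$, and any new individual that dominates an old $\phi$-approximation for such a $b$ is itself a $\phi$-approximation for $b$; hence that part of the population property is preserved with no work, and it remains to treat $b\in(B,B^{*}]$.

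For a fixed target $b\in(B,B^{*}]$ I would track, exactly as in Theorem~\ref{thm:POMC_running_time}, the largest $i$ for which $P$ contains a solution $X$ with $\hat{c}(X)\le i$ and $f(X)\ge\bigl(1-(1-\alpha_f\tfrac{i}{bk})^{k}\bigr)f(\tilde{X}_b)$ for some $k$, and use the fact (Lemma~3 of~\cite{DBLP:conf/ijcai/QianSYT17}) that a mutation adding the global greedy element $v^{*}=\argmax_{v\notin X}\frac{f(X\cup v)-f(X)}{\hat{c}(X\cup v)-\hat{c}(X)}$ raises $i$ by at least $\delta_{\hat c}$ and $k$ by one, and that this holds for every budget simultaneously. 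The crucial observation is the value of $i$ at the moment of the change: since the greedy chain built for budget $B$ sits in $P$, $P$ already contains a solution of cost roughly $B$ that, by the budget-independent form of that inequality, is a progress state for $b$ at level roughly $B$; this is precisely what $I_{\max}(b,b')$ is for -- it certifies, for $b'\le B$, that one stored solution is simultaneously a progress state for the old and the new budget. Because $b-d\le B$, this head start puts $i$ above $b-d-\delta_{\hat c}$, so at most $d/\delta_{\hat c}+O(1)$ further greedy steps push $i$ past $b$, after which $(1-\alpha_f/k^{*})^{k^{*}}\le e^{-\alpha_f}$ and the single-element comparison $\max\{f(X),f(z)\}\ge\phi\, f(\tilde{X}_b)$ with $z=\argmax_{v:\hat{c}(v)\le b}f(v)$ closes the bound for $b$ as in the static case. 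One also checks that all these chain solutions have cost at most $B^{*}+1$, so after the change (when $f_1$ uses the bound $B^{*}+1$) they are no longer dominated by $\{0\}^{n}$ and are admitted into $P$.

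The probabilistic accounting is then verbatim that of Theorem~\ref{thm:POMC_running_time}: a step is a \emph{success} if it selects the current rightmost progress state (probability $\ge 1/P_{\max}$) and flips exactly the bit of $v^{*}$ (probability $\ge 1/(en)$), so each step succeeds with probability $\ge 1/(enP_{\max})$ until a $\phi$-approximation for $b$ is reached; within $T=cnP_{\max}\, d/\delta_{\hat c}$ steps with $c\ge 8e+1$, the moderately-correlated Chernoff bound (Lemma~1 of~\cite{DBLP:journals/ec/DoerrHK11}) gives at least $d/\delta_{\hat c}$ successes with probability $\ge 1-e^{-d/\delta_{\hat c}}$, and a union bound over the at most $d/\delta_{\hat c}$ relevant values of $b$ in $(B,B^{*}]$ yields overall success probability $1-\tfrac{d}{\delta_{\hat c}}e^{-d/\delta_{\hat c}}=\Omega(1)$.

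The main obstacle is the middle step: making rigorous that the work already done for budgets up to $B$ genuinely supplies a head start of level $\approx B$ for every larger target, so that only $d/\delta_{\hat c}$ and not $B^{*}/\delta_{\hat c}$ extra steps are required. This needs (i) the budget-independence of the greedy inequality, packaged through $I_{\max}(b,b')$, so that one chain of solutions serves all budgets at once; (ii) care that comparing the progress coefficients $1-(1-\alpha_f\tfrac{i}{bk})^{k}$ across the old budget $B$ and a new budget $b>B$, together with $f(\tilde{X}_B)\le f(\tilde{X}_b)$, costs at most $O(\delta_{\hat c})$ of level; and (iii) justifying that the near-$B$ chain solution really is in $P$ at the time of the change -- i.e. reading the hypothesis as ``$P$ is the population produced by Theorem~\ref{thm:POMC_running_time}'', since a lone $\phi$-approximation per budget (possibly a low-cost single element) would not by itself supply the head start, whereas any solution dominating a chain element is an even better progress state. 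Everything after that is the same bookkeeping as in the static analysis.
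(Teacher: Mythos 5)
Your proposal follows essentially the same route as the paper's proof: use $I_{\max}(B,B^*)$ (i.e., the budget-independence of the greedy progress inequality) to certify that the population already sits at progress level $\approx B$ for the enlarged budgets, then rerun the static chain argument of Theorem~\ref{thm:POMC_running_time} for the remaining $d/\delta_{\hat c}$ successes with the identical Chernoff and union-bound accounting. Your point (iii) --- that the stated hypothesis of one $\phi$-approximation per budget must really be read as ``$P$ contains the chain of progress states produced by Theorem~\ref{thm:POMC_running_time}'' --- is a fair observation; the paper's proof silently strengthens the hypothesis in exactly that way in its opening sentence.
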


\begin{proof}
	Let $P$ denote the current population of POMC in which, for any budget $b\leq B$, there is a $\left(1-(1-\frac{\alpha_f}{k})^k\right)$-approximate solution for some $k$.
	Such approximations are not lost during the run of the algorithm and we need to show that approximations for budgets $b' \in (B, B^*]$ are also added to the population. We consider an arbitrary budget $b' \in (B,B^*]$ and analyze the time to obtain  $\phi$-approximation for $b'$.
	Let $X$ be the solution corresponding to $I_{\max}(B,b')$, where $b' \in (B,B^*]$.
	Moreover, assume that $v^*= \argmax_{v\notin X}\frac{f(X\cup \{v\})-f(X)}{\hat{c}(X\cup \{v\})-\hat{c}(X)}$ denotes the item with the highest marginal contribution which could be added to $X$ and $X^\prime = X\cup \{v^*\}$. 
	According to Lemma 3 and Theorem 2 in~\cite{DBLP:conf/ijcai/QianSYT17} and the definition of $I_{\max}(B,b')$, we have
	
	
	\begin{flalign*}\nonumber
	&f(X^\prime)\geq \left(1-\left(1-\alpha_f \frac{I_{\max}+\hat{c}(X^\prime)-\hat{c}(X)}{Bk}\right)^k\right)\cdot f({\tilde{X}}_B)\\\nonumber
	&\text{and}\\\nonumber
	&f(X')\geq \left(1-\left(1-\alpha_f \frac{I_{\max}+\hat{c}(X^\prime)-\hat{c}(X)}{b' k'}\right)^{k'}\right)\cdot f({\tilde{X}}_{b'})\text{.}
	\end{flalign*}
	This implies that adding $v^*$ to $X$ violates the budget constraint $B$, otherwise we would have a greater value for $I_{\max}$.
	
	If $I_{\max}+\hat{c}(X^\prime)-\hat{c}(X)\geq b'$, then, similar to the proof of Theorem~\ref{thm:POMC_running_time}, we have 
	$$\max\{f(X),f(z)\}\geq(\alpha_f/2)\cdot \left(1-\frac{1}{e^{\alpha_f}}\right)\cdot f(\tilde{X}_{b'}).$$
	
	Otherwise, we have
	$$f(X^\prime)\geq \left(1-\left(1-\alpha_f \frac{B}{b' k^\prime} \right)^{k^\prime}\right)\cdot f({\tilde{X}}_{b'})\text{.}$$
	From this point, the argument in the proof of Theorem~\ref{thm:POMC_running_time} holds and POMC obtains $b'$, a $\phi$\nobreakdash-approximation after at most $\frac{d}{\delta_{\hat{c}}}$ successes. 
	Using the Chernoff bound and the union bound as done in Theorem 5, the population contains after at most $T=cnP_{\max}d/\delta_{\hat{c}}$ iterations for each $b'\in (B,B^*]$ also a $\phi=(\alpha_f/2)(1-\frac{1}{e^{\alpha_f}})$-approximation with probability at least 
	$1- \frac{d}{\delta_{\hat{c}}} \cdot e^{-\frac{d}{\delta_{\hat{c}}}}$. 
\end{proof}

Note that if the dynamic change is sufficiently large such that $\frac{d}{\delta_{\hat{c}}}\geq \log{n}$, then the probability of having obtained a $\phi$\nobreakdash-approximation, for every budget $b \in [0,B^*]$, increases to $1-o(1)$. A success probability of $1-o(1)$ can be obtained for this magnitude of changes by giving the algorithm time $T'= cnP_{\max} \max\{\frac{d}{\delta_{\hat{c}}}, \log n\}$, where $c\geq 8e+1$.

A special class of known submodular problems is the maximization of a function with a cardinality constraint. In this case, the constraint value can take on at most $n+1$ different values and we have $P_{\max} \leq n+1$. Furthermore, we have $\delta=1$ which leads to the following two corollaries.
\begin{corollary} Consider the static problem with a cardinality constraint and constraint bound $B$.
	POMC computes, for every budget $b\in[0,B]$, a $\phi$\nobreakdash-approximation within $T=cn^2\cdot\max\{B,\log{n}\}$, $c\geq 8e+1$, iterations with probability $1-o(1)$.
\end{corollary}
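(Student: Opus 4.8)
The plan is to obtain this corollary as a direct specialization of Theorem~\ref{thm:POMC_running_time}, by evaluating the two problem-dependent quantities $\delta_{\hat c}$ and $P_{\max}$ in the cardinality-constraint setting. First I would note that a cardinality constraint corresponds to $\hat c(X)=|X|$, so every element contributes exactly $1$ to the cost of any set it is added to; hence $\delta_{\hat c}=\min\{\hat c(X\cup v)-\hat c(X)\mid X\subseteq V, v\notin X\}=1$, and the factor $B/\delta_{\hat c}$ in Theorem~\ref{thm:POMC_running_time} collapses to $B$.

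The second ingredient is the bound $P_{\max}\le n+1$. Here the key observation is that, by construction, the population $P$ of POMC is at all times an antichain with respect to $\succeq$: a new solution is inserted only if no current member dominates it, and upon insertion every member it dominates is deleted. Since $f_2(X)=-\hat c(X)=-|X|$ takes at most $n+1$ distinct values $0,-1,\dots,-n$, and any two solutions sharing the same $f_2$-value are comparable via their $f_1$-values (one weakly dominates the other), an antichain can hold at most one solution per value of $f_2$. Therefore $|P|\le n+1$ throughout the run, i.e.\ $P_{\max}\le n+1$.

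With $\delta_{\hat c}=1$ and $P_{\max}\le n+1$, Theorem~\ref{thm:POMC_running_time} immediately gives that, starting from $\{0\}^n$, POMC produces a $\phi$-approximation for every $b\in[0,B]$ within $cnP_{\max}\cdot(B/\delta_{\hat c})=O(n^2B)$ iterations with constant probability. To upgrade this to probability $1-o(1)$ I would invoke the remark following Theorem~\ref{thm:POMC_running_time}: running for $T'=cnP_{\max}\cdot\max\{\log n,\,B/\delta_{\hat c}\}$ iterations makes each of the at most $B/\delta_{\hat c}=B$ relevant budgets fail with probability at most $e^{-\max\{\log n,B\}}$, so a union bound yields total failure probability at most $B\,e^{-\max\{\log n,B\}}=o(1)$. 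Substituting $\delta_{\hat c}=1$ and $nP_{\max}=O(n^2)$ then gives $T'=cn^2\max\{B,\log n\}$, as claimed.

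I do not expect any real obstacle in this argument: the corollary is essentially a plug-in of the cardinality parameters into the already-established Theorem~\ref{thm:POMC_running_time} plus the probability-amplification remark. The only step needing a little care is the population-size bound $P_{\max}\le n+1$ (the antichain observation above), together with the harmless absorption of the $n(n+1)$-versus-$n^2$ discrepancy into the constant factor.
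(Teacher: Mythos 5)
Your proposal is correct and follows essentially the same route as the paper, which obtains the corollary by substituting $\delta_{\hat c}=1$ and $P_{\max}\leq n+1$ into Theorem~\ref{thm:POMC_running_time} together with the remark that running for $T'=cnP_{\max}\cdot\max\{\log n, B/\delta_{\hat c}\}$ iterations boosts the success probability to $1-o(1)$. The only difference is that you spell out the antichain argument for $P_{\max}\leq n+1$ (one solution per distinct value of $f_2$), whereas the paper simply asserts this bound from the fact that the cardinality cost takes at most $n+1$ distinct values.
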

\begin{corollary}
	Consider the dynamic problem with a cardinality constraint and constraint bound $B$. Assume that $P$ contains a $\phi$\nobreakdash-approximation for every $b\in [0,B]$. Then after increasing the budget to $B^*$, POMC computes, for every budget $b\in[0,B^*]$, a $\phi$\nobreakdash-approximation in time $T = cn^2\max\{d,\log{n}\}$, $c\geq8e+1$ and $d=|B^*-B|$, with probability $1-o(1)$.
\end{corollary}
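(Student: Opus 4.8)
The plan is to obtain this corollary as a direct specialization of the preceding theorem on dynamic budget increases, plugging in the two structural parameters of the cardinality setting and then applying the probability-amplification remark that follows that theorem.

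First I would fix the two problem-specific quantities. Under a cardinality constraint the cost is $\hat{c}(X)=|X|$ (evaluated exactly), so adding any $v\notin X$ raises the cost by exactly one; hence $\delta_{\hat{c}}=\min\{\hat{c}(X\cup v)-\hat{c}(X)\mid X\subseteq V,\ v\notin X\}=1$. For the population size, note that $f_2(X)=-\hat{c}(X)=-|X|$ can only take the values $0,-1,\dots,-n$, so there are at most $n+1$ distinct second-objective values; since POMC maintains an antichain (every newly inserted solution deletes all solutions it dominates, and it is itself rejected if some kept solution dominates it), at most one solution per value of $f_2$ is ever retained, and therefore $P_{\max}\le n+1\le 2n$ for all $n\ge 1$.

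Next I would substitute these into the preceding theorem. Starting from a population that already contains a $\phi$-approximation for every $b\in[0,B]$, that theorem guarantees, after the increase to $B^*>B$, a $\phi$-approximation for every $b\in[0,B^*]$ with probability $\Omega(1)$ within $cnP_{\max}\,d/\delta_{\hat{c}}$ iterations, where $d=B^*-B=|B^*-B|$; with $\delta_{\hat{c}}=1$ and $P_{\max}\le 2n$ this is $O(n^2 d)$. To turn $\Omega(1)$ into $1-o(1)$ I would invoke the remark stated right after that theorem: running for $T'=cnP_{\max}\max\{\log n,\ d/\delta_{\hat{c}}\}$ with $c\ge 8e+1$ suffices, because padding the running time so that effectively $d/\delta_{\hat{c}}\ge\log n$ drives the per-$b$ Chernoff failure probability below an inverse polynomial in $n$, and the union bound over the at most $d/\delta_{\hat{c}}$ values of $b\in[B,B^*]$ still leaves a total failure probability $o(1)$. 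Substituting $\delta_{\hat{c}}=1$ and $P_{\max}\le n+1$ gives $T'\le c\,n(n+1)\max\{d,\log n\}$, which is of the claimed form $T=cn^2\max\{d,\log n\}$ once the factor $(n+1)/n\le 2$ is absorbed into the constant.

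The argument is essentially bookkeeping, so I do not anticipate a real difficulty; the only points deserving care are \emph{(i)} the bound $P_{\max}\le n+1$ — i.e. arguing cleanly that POMC's non-domination rule forces at most one retained solution per cardinality level — and \emph{(ii)} the constant management, namely carrying the constant $c\ge 8e+1$ through the amplification step and replacing $n(n+1)$ by $n^2$, so that the statement can be written with a single universal constant.
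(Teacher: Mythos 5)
Your proposal is correct and follows essentially the same route as the paper: the paper derives this corollary precisely by observing that under a cardinality constraint $\delta_{\hat{c}}=1$ and $P_{\max}\le n+1$, and then substituting these into the preceding dynamic-increase theorem together with the remark that running for $T'=cnP_{\max}\max\{\log n, d/\delta_{\hat{c}}\}$ boosts the success probability to $1-o(1)$. Your additional care in justifying $P_{\max}\le n+1$ via the non-domination rule and in absorbing the $(n+1)/n$ factor into the constant is consistent with (and slightly more explicit than) the paper's treatment.
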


\section{Experimental Investigations}\label{sec:Experiments}

In this section, we experimentally compare the performance of our algorithms on dynamic variants of problems, where the constraint bound changes over time. We first analyze the practical performance of three algorithms that have been investigated theoretically in the previous section, the generalized greedy algorithm (GGA), the adaptive generalized greedy algorithm (AdGGA) and POMC, on the dynamic submodular influence maximization problem~\cite{DBLP:conf/aaai/ZhangV16,DBLP:conf/ijcai/QianSYT17}. In addition to the plain POMC described in Algorithm~\ref{alg:POMC}, we also consider another version in which the algorithm has a warm-up phase. Starting from a zero solution, \POMCwp~performs ten thousand generations before the first change happens, which gives it more time to build a population that is prepared for the following changes.

Afterward, we include two more algorithms. EAMC, introduced in~\cite{bianefficient}, is a newer algorithm that theoretically guarantees $\phi$\nobreakdash-approximation in polynomial expected time $2en^2(n + 1)$. The other algorithm is a version of NSGA-II that benefits from a specific elitism to keep track of the best-found solution in a dynamic environment. We compare these two algorithms with \POMCwp, GGA and AdGGA on the dynamic submodular maximum coverage problem.
The following section describes our experimental setting.

\subsection{Experimental Setting}\label{subsec:exp_setting}
\begin{figure}[t]
	\centering
	\includegraphics[width=.5\textwidth]{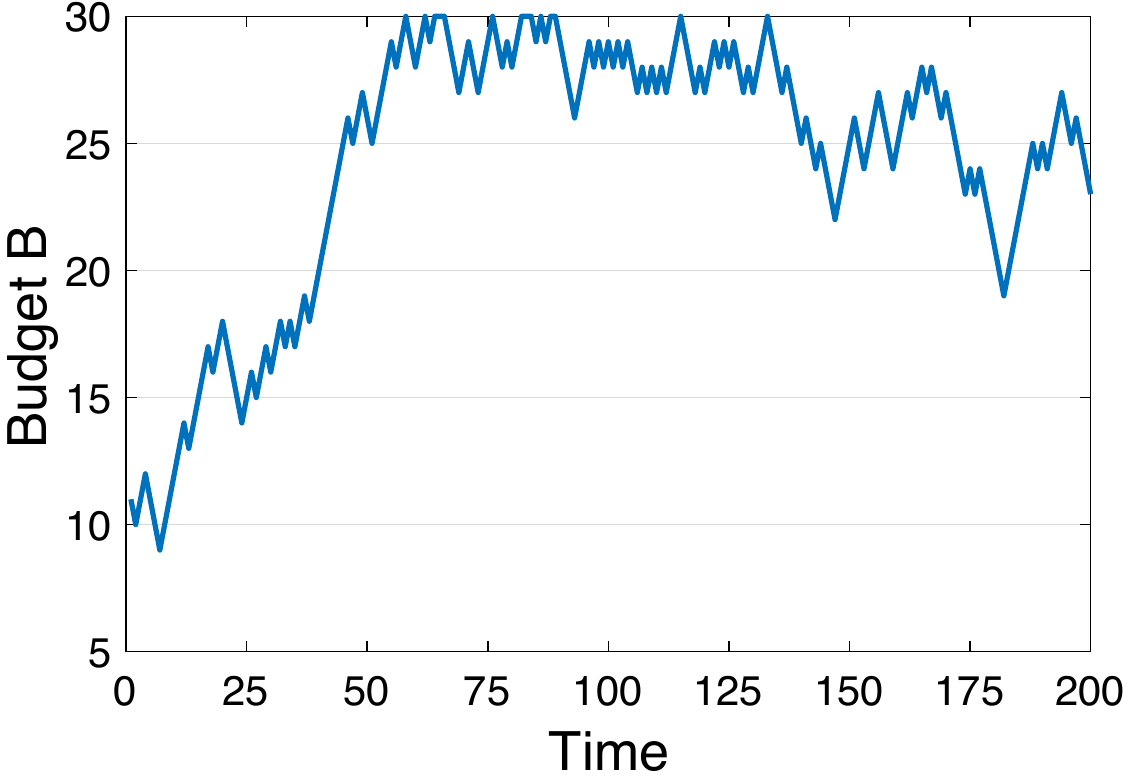}
	\caption{Budget over time for dynamic problems}
	\label{fig:plot}
\end{figure}
We build our dynamic benchmark based on the approach in~\cite{roostapour2020pareto}. We have two problems, and for each problem, we consider two different cost functions. Thus, we study four types of instances. Each instance has an initial, a maximum and a minimum budget denoted by $B_{\text{init}}$, $B_{\max}$ and $B_{\min}$, respectively.
Every $\tau$ evaluation, a dynamic change adds the value of $\delta\in [-r,r]$, which is chosen uniformly at random, to the current budget $B\in [B_{\max},B_{\min}]$. In other words, the iterative algorithms have $\tau$ evaluations to find a solution for budget constraint $B$, before the next change happens. We consider two hundred changes for each run, and there are thirty different sequences of random changes produced for each instance. An example of how the budget values change over time with $B_{\text{init}} = 10$ and $\delta \in \{-1,1\}$ is shown in Figure~\ref{fig:plot}. For the baseline in the influence maximization and the maximum coverage problems, we run \POMCwp~and NSGA-II for all occurred budget constraints for one million generations, respectively.

Algorithms start with the initial budget constraint $B_{\text{init}}$. To calculate the error for each algorithm, let $s^i$ denote the best-found solution right before the change $i$ happens, and $s_{\mathrm{b}}^i$ be the solution found by the baseline algorithm. For each dynamic change, we record error $e_i$ as : $e_i = f(s_{\mathrm{b}}^i) - f(s^i)$. Then for each interval that contains $m$ changes, the partial offline error is $\sum_{i=1}^{m}{e_i/m}$.

We compare the performance of algorithms for each instance according to the thirty partial offline error values. To establish a statistical comparison of the results among different algorithms, we use a multiple comparisons test. In particularity, we focus on the method that compares a set of algorithms. 
For statistical validation, we use the Kruskal-Wallis test with $95$\% confidence. Afterwards, we apply the Bonferroni post-hoc statistical procedures that are used for multiple comparisons of a control algorithm against two or more other algorithms.
For more detailed descriptions of the statistical tests, we refer the reader to~\cite{Corder09}

Our results are summarized in the Tables~\ref{tbl:IM},~\ref{tbl:MC-GGA}, and~\ref{tbl:MC-EAMC}. The columns represent the algorithms with the corresponding mean value and standard deviation.
Note, $X^{(+)}$ is equivalent to the statement that the algorithm in the column outperformed algorithm $X$, and $X^{(-)}$ is equal to the statement that X outperformed the algorithm in the given column. If the algorithm X does not appear, this means that no significant difference was observed between the algorithms.

\subsection{The Influence Maximization Problem}
\label{sec:InfMax}

The influence maximization problem aims to identify a set of most influential users in a social network. Given a directed graph $G=(V,E)$ where each node represents a user. Each edge $(u,v) \in E$ has assigned an edge probability $p_{u,v} ((u,v) \in{E})$. The probability $p_{u,v}$ corresponds to the strengths of influence from user $u$ to user $v$. The goal is to find a subset $X \subseteq V$ such that the expected number of activated nodes from $X$, $IC(X)$, is maximized. Given a cost function $c$ and a budget $B$ the submodular optimization problem is formulated as $\argmax_{X\subseteq V} E[|IC(X)|] \text{ s.t. } c(X)\leq B.$ To calculate $E[|IC(X)|]$ in our experiments, we simulate the random process of influence of solution $X$ for 500 times independently, and use the average as the value of $E[|IC(X)|]$.

We consider two types of cost functions.
The routing constraint takes into account the costs of visiting nodes whereas the cardinality constraint counts the number of chosen nodes. 
For both cost functions, the constraint is met if the cost is at most $B$. 
For more detailed descriptions of the influence maximization through a social network problem, we refer the reader to~\cite{DBLP:conf/aaai/ZhangV16,DBLP:conf/ijcai/QianSYT17,DBLP:journals/toc/KempeKT15}.

To build the dynamic benchmark, as described in Section~\ref{subsec:exp_setting}, we assume that the initial constraint bound is $B_{\text{init}}=10$, which stays within the interval $[5,30]$. $\delta\in\{-1,1\}$ is chosen uniformly at random and we consider four values for $\tau\in\{100,1000,5000,10000\}$. In \POMCwp, we consider the option of POMC having a warm-up phase where there are no dynamic changes for the first $10000$ evaluations. This allows \POMCwp~to optimize for an extended period for the initial bound. It should be noted that the number of evaluations in the warm-up phase and our choices of $\tau$ are relatively small compared to the choice of $10eBn^2$ used in~\cite{DBLP:conf/ijcai/QianSYT17} for optimizing the static problem with a given fixed bound $B$. The results are shown in Table~\ref{tbl:IM}. For this problem, we divide the experiment, which contains a sequence of two hundred changes, into 4 parts and we perform the statistical tests for each part separately. In this way, we show how the iterative algorithms perform during the optimization process.

\subsubsection{Empirical Analysis}


\begin{table}[t]
	
	\centering
	\tiny
	\setlength{\tabcolsep}{1.3pt}
	\caption{The mean, standard deviation values and statistical tests of the partial offline error for GGA, AdGGA, POMC, and \POMCwp correspond to influence maximization problem.}
	\label{tbl:IM}
		\begin{tabular}{llllrrlrrlrrlrrl}
			& \multicolumn{1}{c}{$r$} & \multicolumn{1}{c}{$\tau$} &
			\multicolumn{1}{c}{interval} &
			\multicolumn{3}{c}{GGA (1)}     & \multicolumn{3}{c}{AdGGA (2)}                                                 & \multicolumn{3}{c}{POMC (3)}    & \multicolumn{3}{c}{\POMCwp (4)}                                                              \\
			&    &               &                & \multicolumn{1}{c}{mean} & \multicolumn{1}{c}{st} & \multicolumn{1}{c}{stat} & \multicolumn{1}{c}{mean} & \multicolumn{1}{c}{st} & \multicolumn{1}{c}{stat} & \multicolumn{1}{c}{mean} & \multicolumn{1}{c}{st} & \multicolumn{1}{c}{stat} & \multicolumn{1}{c}{mean} & \multicolumn{1}{c}{st} & \multicolumn{1}{c}{stat} \\ \midrule[.1em]
			Rout  &        1        &        100        &        1-50        &        10.05        &        1.57        &        $3^{(+)}$,$4^{(+)}$        &        11.53        &        3.76        &        $3^{(+)}$,$4^{(+)}$        &        57.93        &        20.13        &        $1^{(-)}$,$2^{(-)}$,$4^{(-)}$        &        29.25        &        9.98        &        $1^{(-)}$,$2^{(-)}$,$3^{(+)}$        \\ constraint      &        1        &        100        &        51-100        &        9.98        &        1.69        &        $3^{(+)}$,$4^{(+)}$        &        12.36        &        4.91        &        $3^{(+)}$,$4^{(+)}$        &        40.88        &        16.20        &        $1^{(-)}$,$2^{(-)}$        &        27.65        &        9.89        &        $1^{(-)}$,$2^{(-)}$        \\       &        1        &        100        &        100-151        &        9.73        &        1.48        &        $3^{(+)}$,$4^{(+)}$        &        12.25        &        4.04        &        $3^{(+)}$,$4^{(+)}$        &        36.96        &        11.91        &        $1^{(-)}$,$2^{(-)}$        &        29.11        &        11.18        &        $1^{(-)}$,$2^{(-)}$        \\       &        1        &        100        &        151-200        &        9.72        &        1.57        &        $3^{(+)}$,$4^{(+)}$        &        14.22        &        5.39        &        $3^{(+)}$,$4^{(+)}$        &        35.96        &        13.74        &        $1^{(-)}$,$2^{(-)}$        &        29.75        &        12.41        &        $1^{(-)}$,$2^{(-)}$        \\ \hline       &        1        &        1000        &        1-50        &        10.05        &        1.57        &        $3^{(+)}$,$4^{(+)}$        &        11.53        &        3.76        &        $3^{(+)}$,$4^{(+)}$        &        24.73        &        5.17        &        $1^{(-)}$,$2^{(-)}$        &        18.80        &        4.63        &        $1^{(-)}$,$2^{(-)}$        \\       &        1        &        1000        &        51-100        &        9.98        &        1.69        &        $3^{(+)}$,$4^{(+)}$        &        12.36        &        4.91        &                &        14.10        &        3.96        &        $1^{(-)}$        &        12.95        &        4.21        &        $1^{(-)}$        \\       &        1        &        1000        &        100-151        &        9.73        &        1.48        &        $2^{(+)}$,$3^{(+)}$        &        12.25        &        4.04        &        $1^{(-)}$        &        12.38        &        3.57        &        $1^{(-)}$        &        11.68        &        4.27        &                \\       &        1        &        1000        &        151-200        &        9.72        &        1.57        &        $2^{(+)}$        &        14.22        &        5.39        &        $1^{(-)}$,$3^{(-)}$,$4^{(-)}$        &        11.19        &        5.54        &        $2^{(+)}$        &        10.55        &        4.67        &        $2^{(+)}$        \\ \hline       &        1        &        5000        &        1-50        &        10.05        &        1.57        &                &        11.53        &        3.76        &        $4^{(-)}$        &        10.74        &        2.33        &                &        9.16        &        2.49        &        $2^{(+)}$        \\       &        1        &        5000        &        51-100        &        9.98        &        1.69        &        $3^{(-)}$,$4^{(-)}$        &        12.36        &        4.91        &        $3^{(-)}$,$4^{(-)}$        &        4.32        &        1.94        &        $1^{(+)}$,$2^{(+)}$        &        3.96        &        2.18        &        $1^{(+)}$,$2^{(+)}$        \\       &        1        &        5000        &        100-151        &        9.73        &        1.48        &        $3^{(-)}$,$4^{(-)}$        &        12.25        &        4.04        &        $3^{(-)}$,$4^{(-)}$        &        3.45        &        1.98        &        $1^{(+)}$,$2^{(+)}$        &        3.38        &        2.28        &        $1^{(+)}$,$2^{(+)}$        \\       &        1        &        5000        &        151-200        &        9.72        &        1.57        &        $3^{(-)}$,$4^{(-)}$        &        14.22        &        5.39        &        $3^{(-)}$,$4^{(-)}$        &        2.64        &        2.42        &        $1^{(+)}$,$2^{(+)}$        &        2.65        &        2.79        &        $1^{(+)}$,$2^{(+)}$        \\ \hline       &        1        &        10000        &        1-50        &        10.05        &        1.57        &        $3^{(-)}$,$4^{(-)}$        &        11.53        &        3.76        &        $3^{(-)}$,$4^{(-)}$        &        6.84        &        1.91        &        $1^{(+)}$,$2^{(+)}$        &        5.78        &        2.20        &        $1^{(+)}$,$2^{(+)}$        \\       &        1        &        10000        &        51-100        &        9.98        &        1.69        &        $3^{(-)}$,$4^{(-)}$        &        12.36        &        4.91        &        $3^{(-)}$,$4^{(-)}$        &        2.39        &        2.68        &        $1^{(+)}$,$2^{(+)}$        &        1.50        &        2.05        &        $1^{(+)}$,$2^{(+)}$        \\       &        1        &        10000        &        100-151        &        9.73        &        1.48        &        $3^{(-)}$,$4^{(-)}$        &        12.25        &        4.04        &        $3^{(-)}$,$4^{(-)}$        &        1.91        &        1.91        &        $1^{(+)}$,$2^{(+)}$        &        1.21        &        2.35        &        $1^{(+)}$,$2^{(+)}$        \\       &        1        &        10000        &        151-200        &        9.72        &        1.57        &        $3^{(-)}$,$4^{(-)}$        &        14.22        &        5.39        &        $3^{(-)}$,$4^{(-)}$        &        1.27        &        2.42        &        $1^{(+)}$,$2^{(+)}$        &        1.08        &        2.42        &        $1^{(+)}$,$2^{(+)}$        \\ 
			
			\midrule[.1em]
			
			Card  &        1        &        100        &        1-50        &        1.18        &        0.47        &        $3^{(+)}$,$4^{(+)}$        &        1.25        &        0.57        &        $3^{(+)}$,$4^{(+)}$        &        68.97        &        12.46        &        $1^{(-)}$,$2^{(-)}$,$4^{(-)}$        &        36.68        &        13.13        &        $1^{(-)}$,$2^{(-)}$,$3^{(+)}$        \\    constraint   &        1        &        100        &        51-100        &        1.00        &        0.55        &        $3^{(+)}$,$4^{(+)}$        &        1.15        &        0.62        &        $3^{(+)}$,$4^{(+)}$        &        51.53        &        12.07        &        $1^{(-)}$,$2^{(-)}$,$4^{(-)}$        &        34.84        &        13.08        &        $1^{(-)}$,$2^{(-)}$,$3^{(+)}$        \\       &        1        &        100        &        100-151        &        0.90        &        0.57        &        $3^{(+)}$,$4^{(+)}$        &        0.97        &        0.63        &        $3^{(+)}$,$4^{(+)}$        &        46.25        &        11.26        &        $1^{(-)}$,$2^{(-)}$        &        33.98        &        13.38        &        $1^{(-)}$,$2^{(-)}$        \\       &        1        &        100        &        151-200        &        0.75        &        0.65        &        $3^{(+)}$,$4^{(+)}$        &        0.79        &        0.72        &        $3^{(+)}$,$4^{(+)}$        &        42.91        &        12.59        &        $1^{(-)}$,$2^{(-)}$        &        32.31        &        13.04        &        $1^{(-)}$,$2^{(-)}$        \\ \hline       &        1        &        1000        &        1-50        &        1.18        &        0.47        &        $3^{(+)}$,$4^{(+)}$        &        1.25        &        0.57        &        $3^{(+)}$,$4^{(+)}$        &        31.53        &        7.18        &        $1^{(-)}$,$2^{(-)}$        &        22.93        &        4.58        &        $1^{(-)}$,$2^{(-)}$        \\       &        1        &        1000        &        51-100        &        1.00        &        0.55        &        $3^{(+)}$,$4^{(+)}$        &        1.15        &        0.62        &        $3^{(+)}$,$4^{(+)}$        &        13.97        &        5.58        &        $1^{(-)}$,$2^{(-)}$        &        12.31        &        4.36        &        $1^{(-)}$,$2^{(-)}$        \\       &        1        &        1000        &        100-151        &        0.90        &        0.57        &        $3^{(+)}$,$4^{(+)}$        &        0.97        &        0.63        &        $3^{(+)}$,$4^{(+)}$        &        10.22        &        4.59        &        $1^{(-)}$,$2^{(-)}$        &        9.55        &        3.14        &        $1^{(-)}$,$2^{(-)}$        \\       &        1        &        1000        &        151-200        &        0.75        &        0.65        &        $3^{(+)}$,$4^{(+)}$        &        0.79        &        0.72        &        $3^{(+)}$,$4^{(+)}$        &        8.28        &        5.48        &        $1^{(-)}$,$2^{(-)}$        &        7.82        &        4.29        &        $1^{(-)}$,$2^{(-)}$        \\ \hline       &        1        &        5000        &        1-50        &        1.18        &        0.47        &        $3^{(+)}$,$4^{(+)}$        &        1.25        &        0.57        &        $3^{(+)}$,$4^{(+)}$        &        10.25        &        2.65        &        $1^{(-)}$,$2^{(-)}$        &        7.88        &        2.02        &        $1^{(-)}$,$2^{(-)}$        \\       &        1        &        5000        &        51-100        &        1.00        &        0.55        &        $3^{(+)}$        &        1.15        &        0.62        &                &        1.80        &        1.02        &        $1^{(-)}$        &        1.29        &        0.95        &                \\       &        1        &        5000        &        100-151        &        0.90        &        0.57        &                &        0.97        &        0.63        &                &        1.59        &        1.29        &                &        1.29        &        0.97        &                \\       &        1        &        5000        &        151-200        &        0.75        &        0.65        &                &        0.79        &        0.72        &                &        1.09        &        1.31        &                &        1.02        &        1.23        &                \\ \hline       &        1        &        10000        &        1-50        &        1.18        &        0.47        &        $3^{(+)}$,$4^{(+)}$        &        1.25        &        0.57        &        $3^{(+)}$,$4^{(+)}$        &        5.40        &        1.41        &        $1^{(-)}$,$2^{(-)}$        &        4.17        &        1.12        &        $1^{(-)}$,$2^{(-)}$        \\       &        1        &        10000        &        51-100        &        1.00        &        0.55        &        $4^{(-)}$        &        1.15        &        0.62        &        $3^{(-)}$,$4^{(-)}$        &        0.68        &        0.63        &        $2^{(+)}$        &        0.40        &        0.58        &        $1^{(+)}$,$2^{(+)}$        \\       &        1        &        10000        &        100-151        &        0.90        &        0.57        &        $4^{(-)}$        &        0.97        &        0.63        &        $4^{(-)}$        &        0.59        &        0.56        &                &        0.31        &        0.62        &        $1^{(+)}$,$2^{(+)}$        \\       &        1        &        10000        &        151-200        &        0.75        &        0.65        &        $3^{(-)}$,$4^{(-)}$        &        0.79        &        0.72        &        $3^{(-)}$,$4^{(-)}$        &        0.19        &        0.37        &        $1^{(+)}$,$2^{(+)}$        &        0.16        &        0.60        &        $1^{(+)}$,$2^{(+)}$        
			
		\end{tabular}
\end{table}

We first investigate the influence maximization for the routing constraint that is based on the simulated networks as done for the static case in~\cite{DBLP:conf/ijcai/QianSYT17}. We consider a social network with $400$ nodes that are built using the popular Barabasi-Albert (BA) model~\cite{albert2002statistical} with edge probability $p = 0.1$. The routing network is based on the Erdos-Renyi (ER) model~\cite{erdds1959random} where each edge is presented with probability $p = 0.02$. Nodes are placed randomly in the plane and the edge costs are given by Euclidean distances. Furthermore, each chosen node has a cost of $0.1$. 

As it can be observed in Table~\ref{tbl:IM}, in the instances with $\tau=100$, both greedy algorithms are significantly better than POMCs. When the frequency of changes is that high, \POMCwp~is not even able to keep reducing the mean of errors in the third and fourth intervals. Furthermore, the impact of the warm-up phase is only noticeable in the first interval and fades when the plain POMC has more time to adapt its population.

Considering the instances with medium-frequency changes, we see a remarkable improvement in the performance of POMC and \POMCwp. Although GGA and AdGGA are still better in the first fifty changes of $\tau=1000$, the results show that POMC and \POMCwp~outperform AdGGA in the final interval and are not significantly worse than GGA anymore. In the frequency of $\tau=5000$, GGA does not beat POMCs in the first interval either. This shows the exceptional ability of POMC algorithms in providing a population so that efficiently adapts to an environment with dynamic changes in such a short time.
Note that in routing constraint, there is no significant difference in the performance of POMC and \POMCwp~when the frequency of changes is medium. However, the results depict that in some cases, such as the third interval of $\tau=1000$ and the first interval of $\tau = 5000$, \POMCwp~outperforms AdGGA while POMC is unable to do so.

When the optimization intervals increase to $\tau = 10000$, POMC and POMC$^\text{wp}$ outperform GGA and AdGGA in all the intervals. This shows that 10000 evaluations are enough for POMC to perform better than the greedy algorithms.

It should be noticed that the routing cost is affected by the structure of the graph. Let there are some nodes in the graph that can influence a considerable number of other nodes. Because of the routing constraint and their distance, the greedy algorithms might not select them together in a single solution. In other words, the greedy behavior of GGA and AdGGA might be against their performance in routing constraint. Such cases do not appear when the constraint is on the number of selected nodes in a solution. Hence, the greedy algorithms should behave much better in cardinality constraint. While our theoretical analyses show the outperformance of POMC, the interesting question is how much it takes for POMC and \POMCwp~to track the dynamic changes in practice.

To consider the cardinality constraint, we use the social news data which is collected from the social news aggregator Digg. 
The Digg dataset contains stories submitted to the platform over a period of a month, and IDs of users who voted on the popular stories.
The data consists of two tables that describe friendship links between users and the anonymized user votes on news stories~\cite{DBLP:journals/corr/abs_1202_0031}.
As in~\cite{DBLP:conf/ijcai/QianSYT17}, we use the preprocessed data with 3523 nodes and 90244 edges, and the estimated edge probabilities from the user votes based on the method in~\cite{barbieri2012topic}. 

While the effect of the warm-up phase is still visible in the first two intervals, similar to the routing constraint, POMC and \POMCwp~cannot beat greedy algorithms in cardinality constraint and high-frequency changes. The statistical advantage of greedy algorithms holds even when $\tau$ increases to $1000$. However, a significant improvement happens in $\tau = 5000$. The results show that greedy algorithms lose their advantage against \POMCwp~after the first fifty changes. But POMC needs more time to reach the other algorithms. 
The fact that evolutionary algorithms do not show a significant performance until $\tau = 5000$ shows the effectiveness of the greedy approaches in situations where the constraint is not affected by the structure of the graph, as discussed above.

Contrary to the routing constraint, POMC and \POMCwp~are outperformed by the greedy algorithms in the first interval, even when $\tau=10000$. However, the warm-up phase helps \POMCwp~to take the lead and outperform other algorithms from the second interval. On the other hand, it takes three intervals for the plain POMC to show statistically better results than the greedy algorithms. Note that the warm-up phase in only 10000 evaluations. Thus, having a prepared population, even for 10000 evaluations for such a complicated benchmark, highly improves the results of evolutionary algorithms in dynamic environment. 

\subsection{EAMC}

EAMC introduced in~\cite{bianefficient} is a recently introduced evolutionary multi-objective approach that achieves the same worst-case approximation but does not face the problem that the population size might increase exponentially during the run of the algorithms. 
The question arises whether EAMC might also provide benefits over POMC in the dynamic setting and we point out in the following major problems when using it for dynamically changing constraint bounds.
EAMC has been proven to find a $\phi$\nobreakdash-approximate solution in expected time $2en^2(n + 1)=O(n^3)$. While the definition of dominance is the only factor to control the population size in POMC, EAMC keeps only two solutions per each possible solution size. Thus, the population size does not increase to more than $2n$ in EAMC. In addition to $f(X)$, this algorithm also uses another fitness function as follow: 
\begin{align*}
g(X)= 
\begin{cases}
f(X),  &  |X|=0 \\
f(X)/(1-e^{-\alpha_f\hat c(X)/B}),  & \text{otherwise}\text{.}
\end{cases}
\end{align*}

Moreover, function $\bin(i)$ returns solutions in the population with size $i$. The polynomial bound for EAMC is the results of specific definition of $g(X)$. Note that according to this definition, one needs submodularity ratio of $f$ to use EAMC which computing its exact value might be difficult. However,~\cite{bianefficient} showed that using a lower bound $\alpha$ for $\alpha_f$ results in the approximation ratio of $(\alpha_f/2)(1-1/e^{\alpha})$.
To adopt EAMC to the dynamic environment, we update the population after each dynamic change. Since EAMC keeps and compares solutions according to their size, recording infeasible solutions can cause removing the feasible ones. Thus, the update process only removes solutions that become infeasible after the change.

\begin{algorithm}[t]
	\SetKwInOut{Input}{input}
	Update $B$\;
	Update $P$\;
	\While{stopping criterion not met}{
		Select $X$ from $P$ uniformly at random\;
		$X^\prime \leftarrow$ flip each bit of $X$ with probability $\frac{1}{n}$\;
		\If{$\hat c(X')\leq B$}{
			$i\leftarrow |X'|$\;
			\If{$\bin(i) \neq \emptyset$}{
				$P\leftarrow P\cup \{X'\}$, $U^i\leftarrow V^i\leftarrow X'$\;
			}
			\Else{
				\If{$g(X')>g(U^i)$}{
					$U^i\leftarrow X'$\;
				}
				\If{$f(X')>f(V^i)$}{
					$V^i\leftarrow X'$\;
				}
			}
			$P\leftarrow(P\setminus \bin(i))\cup\{U^i\}\cup\{V^i\}$\;
		}
		$t=t+1$\;
	}
	

	\caption{EAMC Algorithm}\label{alg:EAMC}
\end{algorithm}

\subsubsection{Behaviour of EAMC for dynamically changing constraints}

Obviously EAMC computes in time $O(n^3)$ for each newly given budget $B^*$ a $\phi$-approximation as this is the expected time to achieve this approximation when starting from scratch. We discuss some problems of EAMC when working with the dynamic setting.
The drawback of EAMC compared to POMC when working in a dynamic setting is that it does not obtain for every possible budget $b \in [0,B]$ a $\phi$-approximation.
This implies that after a dynamic change that decreases the budget, the $\phi$-approximation for the newly given budget $B^*$ may have to be recomputed. Furthermore, a change in the budget effects the function $g(X)$ used in EAMC and changes the objective values of all solutions in the population.

We illustrate the effect of reducing the budget in the following and consider a very simple instance of the knapsack problem which is submodular and we have $\alpha_f=1$. We consider an instance with only two items $e_i=(c_i, f_i)$, where $e_1=(1,1)$ and $e_2=(2,2)$ and $B=2$ initially. The search point $01$ which selects the item $e_2$ only is the optimal solution. Furthermore we have $f(01) > f(10)$ and $g(01) = 2/(1-e^{-1}) > 1/(1-e^{-1/2})=g(10)$ which implies that the optimal population computed by $EAMC$ consists of the search point $00$ and the search point $01$ as only $01$ is included in $bin(1)$. 
However, after reducing the bound to $B^*=1$, the search point $10$ is optimal and the search point $01$ is no longer feasible. Furthermore, we have $f(00)=0$. This implies that after changing the budget to $B^*=1$, the previously optimal population does not have any solution with a meaningful performance guarantee with respect to the new optimal solution $10$ that consists of the item $e_1$ only. Note that the optimal population of POMC for $B=2$ consists of the search points $00$, $01$, and $10$ as they are all feasible and do not dominate each other. After reducing the budget to $B^*=1$, POMC again contains the optimal search point, $10$ in this case.
On the positive side, EAMC reacts to such a change very quickly as it is able to produce the optimal solution $10$ for $B=1$ from the search point $00$ by flipping the first bit.

We now consider the case where the budget increases from $B$ to $B^*$ by the dynamic change. We point out that changing the budget from $B$ to $B^*$ effects the value of $g(X)$ for a given solution $X$ and therefore the final population.
We consider again two items $e'_1=(1,1)$ and $e'_2=(2.1,2)$. For $B=4$, all four possible solutions are feasible and non dominated for POMC. We have $f(01) > f(10)$ and $g(01) = 2/(1-e^{-2.1/4}) > 1/(1-e^{-1/4})=g(10)$ which implies that the optimal population of EAMC consists of the search  points $00$, $01$, and $11$.
Changing the budget to $B^*=12$, we have $g(01) = 2/(1-e^{-2.1/12}) < 1/(1-e^{-1/12})=g(10)$ which implies that the optimal population consists of $00, 01, 10, 11$ even though there is no change in the set of feasible solutions.

The previous investigations shed some light on the behavior of EAMC in the dynamic setting. However, it is an open topic for future research how quickly good approximations can be rediscovered dependent on the severity of a budget change. Especially, it would be interesting to see whether there can be small increases in the budget such that EAMC needs a sufficiently large number of steps to rediscover a $\phi$-approximation.

\subsection{NSGA-II with elitism}
The other algorithm that we use in this section for comparison is a version of NSGA-II with additional elitism with population sizes 20 and 100. NSGA-II, as explained in Algorithm~\ref{alg:nsgaii}, sorts solutions based on non-dominated fronts based on the objective values such that each solutions in front $\mathcal{F}_i$, $i>1$, is dominated by at least one solution in $\mathcal{F}_{i-1}$. It also computes the distance between solutions in the same front, called crowding distance, and uses this to achieve well-distributed solutions. Finally, front ranks and crowding distance is used to generate the next generation. Roostapour et al.~\cite{roostapour2020pareto} showed that the plain version of NSGA-II loses the best-found solution during optimization of the dynamic knapsack problem. That is the result of original approach which prioritize the distribution of solutions. The additional elitism locates the best-found solution according to the budget constraint in each generation and increases its crowding distance. Hence, it assures that this solution is not removed from the population because of the distribution factor. We use $f_{N}$ and $c_N$ as the objective functions and set the population size twenty. To prepare this algorithm for the upcoming changes, we let it to keep solutions with costs up to $B+\delta$ and we set high penalty to solutions that violate this bound. Hence, we define $f_N$ and $c_N$ as follow:
\begin{equation}
c_{N}(X)=
\begin{cases}
c(X)   & \text{if}\ c(x)\leq B+\delta \\
c(X)+(n\cdot c_{\max}+1)  \cdot h(X)   & \text{otherwise}
\end{cases}
\end{equation}
and 
\begin{equation}
f_{N}(X)=
\begin{cases}
f(X)   & \text{if}\ c(x)\leq B+\delta \\
f(X)+(n\cdot f_{\max}+1)  \cdot h(X)   & \text{otherwise,}
\end{cases}
\end{equation}
where $c_{\max}$, $f_{\max}$, and $h(X)$ are maximum possible cost, maximum possible fitness, and the amount of violation, respectively.
\begin{algorithm}[t]
	Update $B$\;
	Update objective values of solutions in population set $P_t$ and offspring set $Q_t$\;
	\While{stopping criterion not met}{
		$R_t \leftarrow P_t\cup Q_t$\tcp*[r]{combine parent and offspring population}
		$\mathcal{F} \leftarrow$ fast-non-dominated-sort($R_t$)\tcp*[r]{$\mathcal{F}=(\mathcal{F}_1,\mathcal{F}_2,\cdots)$, all non-dominated fronts of $R_t$}
		$P_{t+1}\leftarrow \emptyset$ and $i\leftarrow 1$\;
		\While{$|P_{t+1}|+|\mathcal{F}_i|\leq N$}{
			crowding-distance-assignment($\mathcal{F}_i$)\;
			$P_{t+1} \leftarrow P_{t+1} \cup \mathcal{F}_i$\;
			$i\leftarrow i+1$\;
		}
		Sort $\mathcal{F}_i$ based on the crowding distance in descending order\;
		$P_{t+1} \leftarrow P_{t+1}\cup \mathcal{F}_i[1:(N-|P_{t+1}|)]$\;
		\label{lin:elit-nsga}$Q_{t+1}\leftarrow$ make-new-pop($P_{t+1})$\;
		$t\leftarrow t+1$\;
	}
	\caption{NSGA-II}\label{alg:nsgaii}
\end{algorithm}
\subsection{The Maximum Coverage Problem}\label{sec:MaxCov}
We use the maximum coverage problem to compare the performance of AdGGA, GGA, \POMCwp, EAMC, and NSGA-II, which have been described in previous sections. In this problem, a set of elements $U$ and a collection $V=\{S_1,S_2,\cdots,S_n\}$ of subsets of $U$ are given. Considering a monotone cost function $c$, the goal is to select subsets from $V$ such that their union cover the maximum number of elements in $U$, while the cost of this selection does not exceed the budget constraint $B$. To be more precise, we are looking for $\argmax_{X\subseteq V}f(X)=|\bigcup_{S_i\in X}S_i|$ such that $c(X)\leq B$. We use directed graphs as the benchmarks for this problem. Each node $p$ represents a subset $S_p\in V$ that contains node $p$ and all of its adjacent nodes. We use two types of cost functions as defined by~\cite{bianefficient}. The ``outdegree'' cost of node $p$ is calculated as $o(p) = 1-\max\{d(p)-q,0\}$, where $d(p)$ is the out degree of $p$ and $q$ is a constant set to six. For ``random'' cost function we assigned a random positive cost value in $(0,1]$ to each node. Cost of a selected set of nodes $X$ is $c(X)=\sum_{p\in X}o(p)$. Our benchmarks are two graphs originally generated for maximum independent set problem~\cite{DBLP:journals/ai/XuBHL07}. frb35-17-mis is a graph with 450 nodes and 17,827 edges, and the graph of frb30-15-mis has 595 nodes and 27,856 edges.

To apply the dynamic changes, similar to Section~\ref{sec:InfMax}, we generated thirty files for each constraint that includes two hundred random numbers. However, the magnitudes of changes, budget intervals, and initial budget are chosen according to each cost function. The budget for ``outdegree'' is initially set to 500 that is bounded by the $B_{\min} =250$ and $B_{\max} = 750$ during the optimization process, and $\delta$ is a random integer within $[-20,20]$. In instances with ``random'' cost, the budget changes within the range $[0,3]$, initially set to $1$ and the magnitude of changes is $\delta\in\{-0.1,0.1\}$. The results are presented in Tables~\ref{tbl:MC-GGA} and~\ref{tbl:MC-EAMC}.

\subsubsection{Empirical Analysis}
In this section, we compare the performance of three evolutionary and two greedy algorithms in dynamic environments with a variety of frequencies. Moreover, we have closely observe the changes of \POMCwp population size during a single dynamic benchmark. In Section~\ref{sec:InfMax}, we divided the sequence of dynamic changes into four intervals to study how our evolutionary algorithms improve during the optimization process. In this experiment, we consider all 200 changes as a whole, and we calculate the partial offline error for all the changes. Since smaller benchmarks are considered here, we can study how long does it take for the algorithms to compensate the errors generated at the beginning of the dynamic changes and outperform the others after 200 changes.

Note that the random cost is similar to the cardinality constraint in Section~\ref{sec:InfMax} since it considers each node despite the structure of the graph.  In other words, the increase in fitness value occurred by adding a specific node is independent of the rise in the cost. Adding a node that covers lots of other nodes, however, is always  expensive when we consider the outdegree cost. Thus, we expect better results from the greedy approaches when the random constraint is considered.


\begin{table}
	\centering
	\tiny
	\setlength{\tabcolsep}{1pt}
	\caption{The mean, standard deviation values and statistical tests of the partial offline error for GGA, AdGGA, \POMCwp, EAMC, NSGA-II20, and NSGA-II100 with elitism correspond to maximum coverage problem, based on the number of evaluations.}
	\label{tbl:MC-GGA}
		\begin{tabular}{llllrrlrrlrrl}
			& \multicolumn{1}{c}{$n$} & \multicolumn{1}{c}{$r$} & \multicolumn{1}{c}{$\tau$} & \multicolumn{3}{c}{GGA (1)}     & \multicolumn{3}{c}{AdGGA (2)}    & \multicolumn{3}{c}{\POMCwp (3)}\\
			&                            &                       &                            & \multicolumn{1}{c}{mean} & \multicolumn{1}{c}{st} & \multicolumn{1}{c}{stat} & \multicolumn{1}{c}{mean} & \multicolumn{1}{c}{st} & \multicolumn{1}{c}{stat} & \multicolumn{1}{c}{mean} & \multicolumn{1}{c}{st} & \multicolumn{1}{c}{stat} \\ \hline
			frb30-15   & 450                        & 0.1                  & 100                        &        2.12         &        0.55         &        $3^{(+)}$,$4^{(+)}$,$5^{(+)}$,$6^{(+)}$          &          8.92         &        1.30         &        $3^{(+)}$,$4^{(+)}$,$6^{(+)}$          &          33.87         &        13.35         &        $1^{(-)}$,$2^{(-)}$           \\
			random    & 450                        & 0.1                  & 1000                       &        2.12         &        0.55         &        $2^{(+)}$,$3^{(+)}$,$4^{(+)}$,$6^{(+)}$          &          8.92         &        1.30         &        $1^{(-)}$,$4^{(+)}$,$5^{(-)}$         &          10.60         &        3.94         &        $1^{(-)}$,$4^{(+)}$,$5^{(-)}$                \\
			& 450                        & 0.1                  & 5000                       &        2.12         &        0.55         &        $2^{(+)}$,$3^{(+)}$,$4^{(+)}$          &          8.92         &        1.30         &        $1^{(-)}$,$3^{(-)}$,$5^{(-)}$,$6^{(-)}$          &          3.60         &        1.02         &        $1^{(-)}$,$2^{(+)}$,$4^{(+)}$,$5^{(-)}$          \\
			& 450                        & 0.1                  & 15000                      &        2.12         &        0.55         &        $2^{(+)}$,$4^{(+)}$,$5^{(-)}$,$6^{(-)}$          &          8.92         &        1.30         &        $1^{(-)}$,$3^{(-)}$,$5^{(-)}$,$6^{(-)}$          &          1.55         &        0.31         &        $2^{(+)}$,$4^{(+)}$,$6^{(-)}$                 \\
			& 450                        & 0.1                  & 45000                &        2.12         &        0.55         &        $2^{(+)}$,$3^{(-)}$,$5^{(-)}$,$6^{(-)}$          &          8.92         &        1.30         &        $1^{(-)}$,$3^{(-)}$,$5^{(-)}$,$6^{(-)}$          &          0.67         &        0.14         &        $1^{(+)}$,$2^{(+)}$,$4^{(+)}$          \\   \hline
			frb30-15   & 450                        & 20                  & 100                        &        16.66         &        1.54         &        $2^{(+)}$,$3^{(+)}$,$4^{(+)}$,$5^{(+)}$,$6^{(+)}$          &          23.83         &        5.32         &        $1^{(-)}$,$4^{(+)}$          &          28.75         &        3.21         &        $1^{(-)}$,$4^{(+)}$,$6^{(-)}$      \\
			outdegree    & 450                        & 20                &  1000  &        16.66         &        1.54         &        $2^{(+)}$,$4^{(+)}$,$6^{(-)}$          &          23.83         &        5.32         &        $1^{(-)}$,$3^{(-)}$,$6^{(-)}$          &          16.36         &        2.46         &        $2^{(+)}$,$4^{(+)}$,$6^{(-)}$ \\
			& 450                        & 20                  & 5000                       &        16.66         &        1.54         &        $3^{(-)}$,$4^{(+)}$,$6^{(-)}$          &          23.83         &        5.32         &        $3^{(-)}$,$5^{(-)}$,$6^{(-)}$          &          9.01         &        2.11         &        $1^{(+)}$,$2^{(+)}$,$4^{(+)}$ \\
			& 450                        & 20                  & 15000                      &        16.66         &        1.54         &        $3^{(-)}$,$5^{(-)}$,$6^{(-)}$          &          23.83         &        5.32         &        $3^{(-)}$,$5^{(-)}$,$6^{(-)}$          &          6.14         &        1.75         &        $1^{(+)}$,$2^{(+)}$,$4^{(+)}$ \\
			& 450                        & 20                  & 45000          &        16.66         &        1.54         &        $2^{(+)}$,$3^{(-)}$,$5^{(-)}$,$6^{(-)}$          &          23.83         &        5.32         &        $1^{(-)}$,$3^{(-)}$,$5^{(-)}$,$6^{(-)}$          &          4.25         &        1.87         &        $1^{(+)}$,$2^{(+)}$,$4^{(+)}$  \\ \hline
			frb35-17   & 595                        & 0.1                  & 100                &        2.28         &        0.91         &        $3^{(+)}$,$4^{(+)}$,$5^{(+)}$,$6^{(+)}$          &          6.25         &        2.61         &        $3^{(+)}$,$4^{(+)}$,$6^{(+)}$          &          54.35         &        21.71         &        $1^{(-)}$,$2^{(-)}$   \\
			random    & 595                        & 0.1                  & 1000              &        2.28         &        0.91         &        $3^{(+)}$,$4^{(+)}$,$5^{(+)}$,$6^{(+)}$          &          6.25         &        2.61         &        $3^{(+)}$,$4^{(+)}$,$6^{(+)}$          &          18.58         &        8.93         &        $1^{(-)}$,$2^{(-)}$,$4^{(+)}$            \\
			& 595                        & 0.1                  & 5000              &        2.28         &        0.91         &        $2^{(+)}$,$3^{(+)}$,$4^{(+)}$,$6^{(+)}$          &          6.25         &        2.61         &        $1^{(-)}$,$4^{(+)}$,$5^{(-)}$          &          6.32         &        2.67         &        $1^{(-)}$,$4^{(+)}$,$5^{(-)}$          \\
			&595                        & 0.1                  & 15000              &        2.28         &        0.91         &        $2^{(+)}$,$4^{(+)}$,$5^{(-)}$          &          6.25         &        2.61         &        $1^{(-)}$,$3^{(-)}$,$5^{(-)}$,$6^{(-)}$          &          2.36         &        0.82         &        $2^{(+)}$,$4^{(+)}$,$5^{(-)}$           \\
			& 595                        & 0.1                  & 45000              &        2.28         &        0.91         &        $3^{(-)}$,$4^{(+)}$,$5^{(-)}$,$6^{(-)}$          &          6.25         &        2.61         &        $3^{(-)}$,$5^{(-)}$,$6^{(-)}$          &          0.73         &        0.25         &        $1^{(+)}$,$2^{(+)}$,$4^{(+)}$           \\ \hline
			frb35-17   & 595                        & 20                  & 100               &        16.00         &        1.08         &        $2^{(+)}$,$3^{(+)}$,$4^{(+)}$,$5^{(+)}$,$6^{(+)}$          &          35.24         &        10.69         &        $1^{(-)}$,$4^{(+)}$          &          38.48         &        3.71         &        $1^{(-)}$,$4^{(+)}$,$6^{(-)}$           \\
			outdegree    & 595                        & 20                  & 1000          &        16.00         &        1.08         &        $2^{(+)}$,$4^{(+)}$,$5^{(+)}$          &          35.24         &        10.69         &        $1^{(-)}$,$3^{(-)}$,$6^{(-)}$          &          21.25         &        2.73         &        $2^{(+)}$,$4^{(+)}$,$6^{(-)}$          \\
			& 595                        & 20                  & 5000             &        16.00         &        1.08         &        $2^{(+)}$,$3^{(-)}$,$4^{(+)}$,$6^{(-)}$          &          35.24         &        10.69         &        $1^{(-)}$,$3^{(-)}$,$5^{(-)}$,$6^{(-)}$          &          9.77         &        2.42         &        $1^{(+)}$,$2^{(+)}$,$4^{(+)}$,$5^{(+)}$          \\
			& 595                        & 20                  & 15000                &        16.00         &        1.08         &        $2^{(+)}$,$3^{(-)}$,$4^{(+)}$,$6^{(-)}$          &          35.24         &        10.69         &        $1^{(-)}$,$3^{(-)}$,$5^{(-)}$,$6^{(-)}$          &          4.93         &        2.40         &        $1^{(+)}$,$2^{(+)}$,$4^{(+)}$          \\
			& 595                        & 20                  & 45000                &        16.00         &        1.08         &        $2^{(+)}$,$3^{(-)}$,$6^{(-)}$          &          35.24         &        10.69         &        $1^{(-)}$,$3^{(-)}$,$5^{(-)}$,$6^{(-)}$          &          2.12         &        1.86         &        $1^{(+)}$,$2^{(+)}$,$4^{(+)}$,$5^{(+)}$          \\                         
		\end{tabular}
\end{table}

The results of 6 different approaches in terms of mean, standard deviation and statistical test are shown in Tables~\ref{tbl:MC-GGA} and~\ref{tbl:MC-EAMC}.
Generally observing the results in Table 2, AdGGA performed worse than GGA in most of the cases. It should be noticed that its inability to trace the best solution is already proven in Section~\ref{sec:AdGGA}.  However, the results confirm that in comparison with the evolutionary approaches, AdGGA's greedy behavior is still beneficial in the environment that the frequency of changes is high. On the other hand, in Table 3, we have EAMC that has the worst performance among the other iterative algorithms in most of the cases. It does not statistically outperform any of the algorithms in our experiments. Considering the random cost, it achieves a lower mean of partial error only in the smaller benchmark and $\tau=45000$. While 45000 is still significantly less than the proven theoretical expected time $O(n^3)$, EAMC cannot compete with the other algorithms. The situation becomes a bit better with the outdegree cost. Although it is still unable to outperform the others, the mean of its partial offline error becomes lower than AdGGA's in $\tau=15000$. The reason for such a poor performance is how EAMC compares solutions to keep them in the population. Despite the fact that its approach guarantees the polynomial size of the population,  keeping a solution only because no better solution with the same size has been found increases the chance of storing a bad solution. Hence, it needs more time to prepare its population for the next dynamic change. In the following, we compare the performance of the rest of the algorithms in detail.

\begin{table}
	\centering
	\tiny
	\setlength{\tabcolsep}{1pt}
	\caption{The mean, standard deviation values and statistical tests of the partial offline error for GGA, AdGGA, \POMCwp, EAMC, NSGA-II20, and NSGA-II100 with elitism correspond to maximum coverage problem, based on the number of evaluations.}
	\label{tbl:MC-EAMC}
		\begin{tabular}{llllrrlrrlrrl}
			& \multicolumn{1}{c}{$n$} & \multicolumn{1}{c}{$r$} & \multicolumn{1}{c}{$\tau$} & \multicolumn{3}{c}{EAMC (4)}   & \multicolumn{3}{c}{NSGA-II20(5)} & \multicolumn{3}{c}{NSGA-II100(6)}\\
			&                            &                       &                            & \multicolumn{1}{c}{mean} & \multicolumn{1}{c}{st} & \multicolumn{1}{c}{stat} & \multicolumn{1}{c}{mean} & \multicolumn{1}{c}{st} & \multicolumn{1}{c}{stat} & \multicolumn{1}{c}{mean} & \multicolumn{1}{c}{st} & \multicolumn{1}{c}{stat} \\ \hline
			frb30-15   & 450                        & 0.1                  & 100                        &          59.57         &        24.18         &        $1^{(-)}$,$2^{(-)}$,$5^{(-)}$         &          19.78         &        9.96         &        $1^{(-)}$,$4^{(+)}$,$6^{(+)}$          &          36.86         &        8.53         &        $1^{(-)}$,$2^{(-)}$,$5^{(-)}$      \\
			random    & 450                        & 0.1                  & 1000                       &                28.16         &        9.14         &        $1^{(-)}$,$2^{(-)}$,$3^{(-)}$,$5^{(-)}$,$6^{(-)}$          &          5.27         &        1.86         &        $2^{(+)}$,$3^{(+)}$,$4^{(+)}$,$6^{(+)}$          &          8.40         &        1.91         &        $1^{(-)}$,$4^{(+)}$,$5^{(-)}$       \\
			& 450                        & 0.1                  & 5000                       &          15.20         &        4.86         &        $1^{(-)}$,$3^{(-)}$,$5^{(-)}$,$6^{(-)}$          &          2.15         &        0.75         &        $2^{(+)}$,$3^{(+)}$,$4^{(+)}$          &          2.37         &        0.41         &        $2^{(+)}$,$4^{(+)}$       \\
			& 450                        & 0.1                  & 15000                      &          10.20         &        3.37         &        $1^{(-)}$,$3^{(-)}$,$5^{(-)}$,$6^{(-)}$          &          1.13         &        0.33         &        $1^{(+)}$,$2^{(+)}$,$4^{(+)}$          &          1.00         &        0.27         &        $1^{(+)}$,$2^{(+)}$,$3^{(+)}$,$4^{(+)}$       \\
			& 450                        & 0.1                  & 45000                &          6.76         &        2.29         &        $3^{(-)}$,$5^{(-)}$,$6^{(-)}$          &          0.59         &        0.23         &        $1^{(+)}$,$2^{(+)}$,$4^{(+)}$          &          0.53         &        0.18         &        $1^{(+)}$,$2^{(+)}$,$4^{(+)}$       \\   \hline
			frb30-15   & 450                        & 20                  & 100         &          43.18         &        6.57         &        $1^{(-)}$,$2^{(-)}$,$3^{(-)}$,$5^{(-)}$,$6^{(-)}$          &          28.18         &        3.62         &        $1^{(-)}$,$4^{(+)}$,$6^{(-)}$          &          23.09         &        3.71         &        $1^{(-)}$,$3^{(+)}$,$4^{(+)}$,$5^{(+)}$       \\
			outdegree    & 450                        & 20                &  1000    &          32.54         &        6.88         &        $1^{(-)}$,$3^{(-)}$,$5^{(-)}$,$6^{(-)}$          &          19.48         &        2.32         &        $4^{(+)}$,$6^{(-)}$          &          11.55         &        2.42         &        $1^{(+)}$,$2^{(+)}$,$3^{(+)}$,$4^{(+)}$,$5^{(+)}$       \\
			& 450                        & 20                  & 5000             &          27.69         &        6.31         &        $1^{(-)}$,$3^{(-)}$,$5^{(-)}$,$6^{(-)}$          &          12.58         &        2.56         &        $2^{(+)}$,$4^{(+)}$,$6^{(-)}$          &          8.20         &        1.57         &        $1^{(+)}$,$2^{(+)}$,$4^{(+)}$,$5^{(+)}$       \\
			& 450                        & 20                  & 15000                      &          22.75         &        6.16         &        $3^{(-)}$,$5^{(-)}$,$6^{(-)}$          &          9.95         &        2.14         &        $1^{(+)}$,$2^{(+)}$,$4^{(+)}$,$6^{(-)}$          &          5.82         &        2.11         &        $1^{(+)}$,$2^{(+)}$,$4^{(+)}$,$5^{(+)}$       \\
			& 450                        & 20                  & 45000          &          21.71         &        3.23         &        $3^{(-)}$,$5^{(-)}$,$6^{(-)}$          &          7.89         &        2.03         &        $1^{(+)}$,$2^{(+)}$,$4^{(+)}$          &          5.02         &        2.53         &        $1^{(+)}$,$2^{(+)}$,$4^{(+)}$       \\ \hline
			frb35-17   & 595                        & 0.1                  & 100            &          100.95         &        45.88         &        $1^{(-)}$,$2^{(-)}$,$5^{(-)}$          &          31.27         &        13.57         &        $1^{(-)}$,$4^{(+)}$,$6^{(+)}$          &          73.69         &        11.27         &        $1^{(-)}$,$2^{(-)}$,$5^{(-)}$       \\
			random    & 595                        & 0.1                  & 1000           &          43.73         &        17.72         &        $1^{(-)}$,$2^{(-)}$,$3^{(-)}$,$5^{(-)}$          &          9.49         &        4.08         &        $1^{(-)}$,$4^{(+)}$,$6^{(+)}$          &          20.82         &        7.19         &        $1^{(-)}$,$2^{(-)}$,$5^{(-)}$       \\
			& 595                        & 0.1                  & 5000            &          22.13         &        7.55         &        $1^{(-)}$,$2^{(-)}$,$3^{(-)}$,$5^{(-)}$,$6^{(-)}$          &          3.38         &        1.04         &        $2^{(+)}$,$3^{(+)}$,$4^{(+)}$,$6^{(+)}$          &          5.62         &        1.80         &        $1^{(-)}$,$4^{(+)}$,$5^{(-)}$       \\
			&595                        & 0.1                  & 15000           &          13.73         &        4.21         &        $1^{(-)}$,$3^{(-)}$,$5^{(-)}$,$6^{(-)}$          &          1.40         &        0.43         &        $1^{(+)}$,$2^{(+)}$,$3^{(+)}$,$4^{(+)}$          &          1.71         &        0.31         &        $2^{(+)}$,$4^{(+)}$       \\
			& 595                        & 0.1                  & 45000               &          8.53         &        2.23         &        $1^{(-)}$,$3^{(-)}$,$5^{(-)}$,$6^{(-)}$          &          0.62         &        0.25         &        $1^{(+)}$,$2^{(+)}$,$4^{(+)}$          &          0.66         &        0.17         &        $1^{(+)}$,$2^{(+)}$,$4^{(+)}$       \\ \hline
			frb35-17   & 595                        & 20                  & 100        &          56.11         &        6.40         &        $1^{(-)}$,$2^{(-)}$,$3^{(-)}$,$5^{(-)}$,$6^{(-)}$          &          37.82         &        4.50         &        $1^{(-)}$,$4^{(+)}$,$6^{(-)}$          &          26.64         &        5.88         &        $1^{(-)}$,$3^{(+)}$,$4^{(+)}$,$5^{(+)}$       \\
			outdegree    & 595                        & 20                  & 1000   &          46.23         &        7.88         &        $1^{(-)}$,$3^{(-)}$,$5^{(-)}$,$6^{(-)}$          &          25.41         &        3.45         &        $1^{(-)}$,$4^{(+)}$,$6^{(-)}$          &          11.09         &        3.77         &        $2^{(+)}$,$3^{(+)}$,$4^{(+)}$,$5^{(+)}$       \\
			& 595                        & 20                  & 5000          &          36.30         &        8.20         &        $1^{(-)}$,$3^{(-)}$,$5^{(-)}$,$6^{(-)}$          &          16.07         &        3.23         &        $2^{(+)}$,$3^{(-)}$,$4^{(+)}$,$6^{(-)}$          &          6.36         &        2.36         &        $1^{(+)}$,$2^{(+)}$,$4^{(+)}$,$5^{(+)}$       \\
			& 595                        & 20                  & 15000              &          31.93         &        6.15         &        $1^{(-)}$,$3^{(-)}$,$5^{(-)}$,$6^{(-)}$          &          11.96         &        2.86         &        $2^{(+)}$,$4^{(+)}$,$6^{(-)}$          &          3.15         &        1.47         &        $1^{(+)}$,$2^{(+)}$,$4^{(+)}$,$5^{(+)}$       \\
			& 595                        & 20                  & 45000              &          27.00         &        4.74         &        $3^{(-)}$,$5^{(-)}$,$6^{(-)}$          &          8.19         &        2.51         &        $2^{(+)}$,$3^{(-)}$,$4^{(+)}$,$6^{(-)}$          &          1.54         &        1.42         &        $1^{(+)}$,$2^{(+)}$,$4^{(+)}$,$5^{(+)}$       \\                         
		\end{tabular}
\end{table}

Consider the results for smaller benchmark, frb30-15. In instances with $\tau=100$ and random cost, GGA and AdGGA are significantly better than the other algorithms. Moreover,  NSGA-II20 achieves better results than \POMCwp. The benefit of the greedy approach in the random cost, do not let NSGA-II20 and NSGA-II100 dominate GGA before doing 15000 evaluations. For \POMCwp, it is more challenging to get better than greedy algorithms. It requires 15000 evaluations before each change to perform as good as NSGA-II20 and NSGA-II100, and 45000 evaluations to outperform GGA. However, the situation is different when the cost function is calculated based on the outdegree. $\tau=5000$ and $\tau=15000$ are enough for \POMCwp and NSGA-II20, respectively, to outperform the greedy algorithms. In contrast, NSGA-II100 outperforms the greedy algorithms in $\tau=1000$.

For the more complicated benchmark, frb35-17, at most the same results hold. However, it gets harder for our evolutionary algorithms to beat greedy algorithms. 
For the random cost, AdGGA outperforms NSGA-II100, in addition to \POMCwp and EAMC, when $\tau=100$ and GGA remains unbeatable until $\tau= 45000$. This clearly shows that larger population slows down the NSGA-II when the frequency changes are high.

GGA also gains better results with outdegree cost. It outperforms other algorithms in $\tau= 100$, and loses only to POMC and NSGA-II100 after $\tau= 5000$. These results are reasonable since the size of solutions increases to 595 bits in the frb35-17 benchmark, which increases the expected time for each bit to be flipped by an evolutionary algorithm. 
Consequently, our algorithms need more time to find better solutions. However, the results show that NSGA-II20 seems to have more difficulties in dealing with the bigger size than \POMCwp. As expected NSGA-II100 is able to take advantage of the bigger population size and outperforms greedy algorithm, EMAC and NSGA-II20 from $\tau = 5000$ outwards.

The other fact extracted from the results is that NSGA-II20 has a better performance than EAMC for all $\tau$ values, and NSGA-II100 until $\tau = 5000$ when we have random cost.
On the other hand, the results of \POMCwp~are better compared to EAMC for all $\tau$ values, and to NSGA-II20 only for $\tau = 5000, 45000$ at outdegree cost.

This pattern can be observed in both benchmarks. That is the result of the population size and how each algorithm handles the replacement of a new solution. While \POMCwp~can store a valuable solution for each possible cost, the NSGA-II20 and NSGA-II100 concentrates on producing a well-distributed population. Consider the cases that there are a lot of non-dominated solutions with cost values close to the constraint. \POMCwp~can keep them all, which is an advantage if the next dynamic change does not affect the constraint significantly. However, because of the distribution factor, NSGA-II20 and NSGA-II100 minimize the number of solutions that are close to each other. These situations rarely happen with random constraint, since, as discussed previously, it is similar to the cardinality constraint and the possible values for the cost of the solutions are more evenly distributed. 

NSGA-II100 outperforms the greedy algorithm, EMAC and NSGA-II20 from $\tau = 1000$ at outdegree cost in both benchmarks. The results confirm that the population size is an important setting parameter. The large population size increases the diversity of a population i.e., higher number of different solutions. In contrast, NSGA-II20 leads to a crowded population, where the number of similar solutions is higher. This normally causes premature convergence i.e. NSGA-II20 is not able to generate offspring that are superior to their parents.

\begin{figure}[t]
	\centering
	\includegraphics[width=1\textwidth]{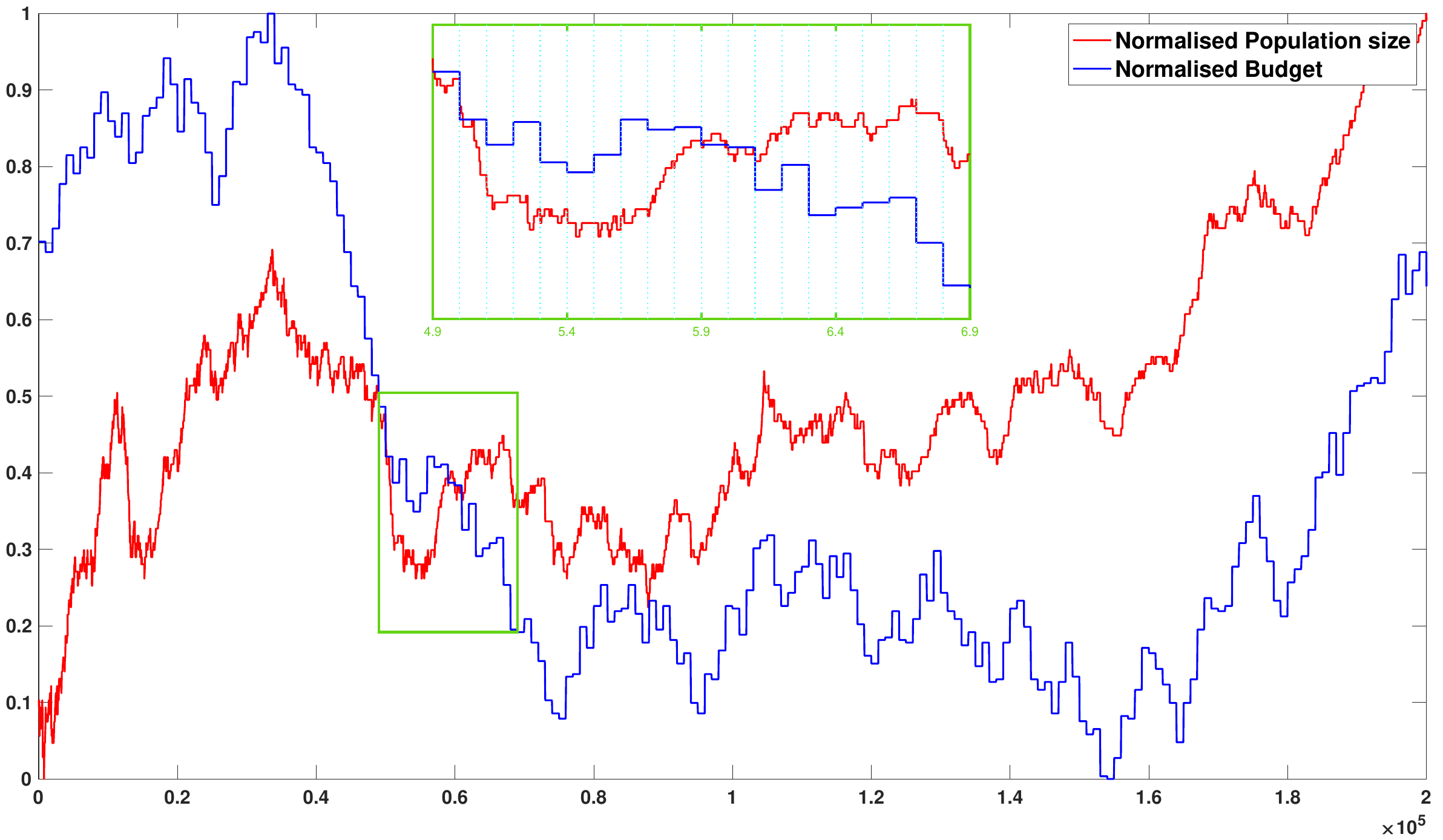}
	\caption{Normalized budget and population size for \POMCwp algorithm performing on frb30-15 instance with $\tau = 1000$ under outdegree cost. $Y$ axis shows the normalized value of budget and population size, and $X$ axis demonstrates the number of generations.}
	\label{fig:Pomc-pop}
\end{figure}

Next, we take a closer look on how the population size of \POMCwp~changes during one dynamic benchmark.
Figure~\ref{fig:Pomc-pop} presents the normalized values of the dynamic budget and population size of \POMCwp~for each generation when the algorithm is performing on the frb30-15 instance under outdegree cost function and a dynamic change happens every 200 generation. The maximum and minimum of the population size before normalization is 234 and 127, respectively. Moreover, the budget cost has changed within [295,587]. The figure clearly shows the impact of dynamic changes on the population size of \POMCwp. 

As expected, decreasing the budget value results in a drop in the population size. However, larger alteration in a budget does not necessarily cause loosing more solutions. As an example, consider the magnified part of the figure. Each vertical line in this figure demonstrates where the dynamic change happens. Although the magnitude of dynamic changes in generations 53000 and 54000 are significantly different, the number of solutions that violate the budget constraint in result of those changes are almost the same. 

It should be noted that dynamic changes are not the only reasons for the alterations in \POMCwp population size during optimization process. As mentioned in Section \ref{Sec:Algs}, \POMCwp controls the quality of the stored solution as well as its population size by storing non-dominated solutions. In other words, when a new non-dominated solution is found the algorithm removes the existing solutions that are dominated by the new one from the population. Having a closer look at the magnified part of Figure \ref{fig:Pomc-pop}, it can be observed that the population size could be decreased/increased significantly even while the budget has not changed. Examples could be generation 50000-51000 for decreasing and 57000-58000 for increasing the population size.

Figure~\ref{fig:Pomc-pop} also shows how \POMCwp algorithm improves the quality of population during the optimization period. During the sequential decrease that is started from the generation 36000 until 77000 (i.e. 36th change until 77th) it can be seen that \POMCwp has maintained the population such that the impact of dynamic changes is controlled. Moreover, although the budget is gradually decreasing before the 154th change (i.e. generation 154000), the population size is growing in comparison with its value in the 55th change (i.e. generation 55000). This shows \POMCwp has found solutions with higher quality that can produce more robust population. 

\section{Conclusions}
Many real-world problems can be modeled as submodular functions and have problems with dynamically changing constraints. We have contributed to the area of submodular optimization with dynamic constraints. Key to the investigations have been the adaptability of algorithms when constraints change. We have shown that an adaptive version of the generalized greedy algorithm frequently considered in submodular constrained optimization is not able to maintain a $\phi$\nobreakdash-approximation. Furthermore, we have pointed out that the population-based POMC algorithm is able to cater for and recompute $\phi$\nobreakdash-approximations for related constraints bounds in an efficient way. We challenged POMC and greedy approaches against EAMC as a recently introduced algorithm with polynomial expected running time, and NSGA-II with different population sizes as advanced multi-objective algorithms in practice. Our experimental results confirm the advantage of POMC and NSGA-II over the considered greedy approaches on important real-world problems. Furthermore, the experiments show that evolutionary algorithms are able to significantly improve their performance over time when dealing with dynamic changes.

\section{Acknowledgements}
We thank Chao Qian for providing his POMC and EAMC implementations and test data to carry out our experimental investigations.
This research has been supported by the Australian Research Council (ARC) through grant DP160102401 and the German Science Foundation (DFG) through grant FR2988 (TOSU). Frank Neumann has been support by the Alexander von Humboldt Foundation through a Humboldt Fellowship for Experienced Researchers.

	\bibliographystyle{abbrv}
\bibliography{ref}

\begin{thebibliography}{10}

\bibitem{albert2002statistical}
R.~Albert and A.-L. Barab{\'a}si.
\newblock Statistical mechanics of complex networks.
\newblock {\em Reviews of modern physics}, 74(1):47, 2002.

\bibitem{barbieri2012topic}
N.~Barbieri, F.~Bonchi, and G.~Manco.
\newblock Topic-aware social influence propagation models.
\newblock In {\em Proceedings of the IEEE Conference on Data Mining}, pages
  81--90. IEEE Computer Society, 2012.

\bibitem{bianefficient}
C.~Bian, C.~Feng, C.~Qian, and Y.~Yu.
\newblock An efficient evolutionary algorithm for subset selection with general
  cost constraints.
\newblock In {\em Proceedings of the Thirty-Fourth {AAAI} Conference on
  Artificial Intelligence, {AAAI} 2020}, pages 3267--3274. AAAI Press, 2020.

\bibitem{DBLP:conf/gecco/Bossek0PS19}
J.~Bossek, F.~Neumann, P.~Peng, and D.~Sudholt.
\newblock Runtime analysis of randomized search heuristics for dynamic graph
  coloring.
\newblock In {\em Proceedings of the Genetic and Evolutionary Computation
  Conference, {GECCO}}, pages 1443--1451. {ACM}, (2019).

\bibitem{DBLP:journals/dam/ConfortiC84}
M.~Conforti and G.~Cornu{\'{e}}jols.
\newblock Submodular set functions, matroids and the greedy algorithm: Tight
  worst-case bounds and some generalizations of the rado-edmonds theorem.
\newblock {\em Discrete Applied Mathematics}, 7(3):251--274, 1984.

\bibitem{Corder09}
G.~W. Corder and D.~I. Foreman.
\newblock {\em {Nonparametric Statistics for Non-Statisticians: A Step-by-Step
  Approach}}.
\newblock Wiley, 2009.

\bibitem{DBLP:journals/tec/DebAPM02}
K.~Deb, S.~Agrawal, A.~Pratap, and T.~Meyarivan.
\newblock A fast and elitist multiobjective genetic algorithm: {NSGA-II}.
\newblock {\em {IEEE} Transactions on Evolutionary Computation}, 6(2):182--197,
  2002.

\bibitem{DBLP:journals/ec/DoerrHK11}
B.~Doerr, E.~Happ, and C.~Klein.
\newblock Tight analysis of the {(1+1)-EA} for the single source shortest path
  problem.
\newblock {\em Evolutionary Computation}, 19(4):673--691, 2011.

\bibitem{erdds1959random}
P.~Erd{\H{o}}s and A.~R{\'e}nyi.
\newblock On random graphs {I}.
\newblock {\em Publicationes Mathematicae Debrecen}, 6:290--297, 1959.

\bibitem{DBLP:journals/ec/FriedrichHHNW10}
T.~Friedrich, J.~He, N.~Hebbinghaus, F.~Neumann, and C.~Witt.
\newblock Approximating covering problems by randomized search heuristics using
  multi-objective models.
\newblock {\em Evolutionary Computation}, 18(4):617--633, 2010.

\bibitem{DBLP:journals/ec/FriedrichN15}
T.~Friedrich and F.~Neumann.
\newblock Maximizing submodular functions under matroid constraints by
  evolutionary algorithms.
\newblock {\em Evolutionary Computation}, 23(4):543--558, 2015.

\bibitem{DBLP:journals/corr/abs_1202_0031}
T.~Hogg and K.~Lerman.
\newblock Social dynamics of {D}igg.
\newblock {\em EPJ Data Science}, 1(1):5, 2012.

\bibitem{DBLP:conf/nips/IyerB13}
R.~K. Iyer and J.~A. Bilmes.
\newblock Submodular optimization with submodular cover and submodular knapsack
  constraints.
\newblock In {\em Advances in Neural Information Processing Systems 26: 27th
  Annual Conference on Neural Information Processing Systems}, pages
  2436--2444, (2013).

\bibitem{DBLP:journals/toc/KempeKT15}
D.~Kempe, J.~M. Kleinberg, and {\'{E}}.~Tardos.
\newblock Maximizing the spread of influence through a social network.
\newblock {\em Theory of Computing}, 11:105--147, 2015.

\bibitem{DBLP:journals/ipl/KhullerMN99}
S.~Khuller, A.~Moss, and J.~Naor.
\newblock The budgeted maximum coverage problem.
\newblock {\em Information Processing Letters}, 70(1):39--45, 1999.

\bibitem{krause2005note}
A.~Krause and C.~Guestrin.
\newblock {\em A note on the budgeted maximization of submodular functions}.
\newblock Carnegie Mellon University. Center for Automated Learning and
  Discovery, 2005.

\bibitem{DBLP:journals/tec/LaumannsTZ04}
M.~Laumanns, L.~Thiele, and E.~Zitzler.
\newblock Running time analysis of multiobjective evolutionary algorithms on
  pseudo-boolean functions.
\newblock {\em {IEEE} Transactions on Evolutionary Computation}, 8(2):170--182,
  2004.

\bibitem{DBLP:conf/naacl/LinB10}
H.~Lin and J.~A. Bilmes.
\newblock Multi-document summarization via budgeted maximization of submodular
  functions.
\newblock In {\em Human Language Technologies: Conference of the North American
  Chapter of the Association of Computational Linguistics}, pages 912--920. The
  Association for Computational Linguistics, (2010).

\bibitem{nemhauser1981maximizing}
G.~L. Nemhauser and L.~A. Wolsey.
\newblock Maximizing submodular set functions: formulations and analysis of
  algorithms.
\newblock In {\em North-Holland Mathematics Studies}, volume~59, pages
  279--301. Elsevier, (1981).

\bibitem{DBLP:journals/mp/NemhauserWF78}
G.~L. Nemhauser, L.~A. Wolsey, and M.~L. Fisher.
\newblock An analysis of approximations for maximizing submodular set functions
  - {I}.
\newblock {\em Mathematical Programming}, 14(1):265--294, 1978.

\bibitem{DBLP:conf/ijcai/NeumannW15}
F.~Neumann and C.~Witt.
\newblock On the runtime of randomized local search and simple evolutionary
  algorithms for dynamic makespan scheduling.
\newblock In {\em Proceedings of the Twenty-Fourth International Joint
  Conference on Artificial Intelligence, {IJCAI}}, pages 3742--3748. {AAAI}
  Press, (2015).

\bibitem{POURHASSAN2019}
M.~Pourhassan, V.~Roostapour, and F.~Neumann.
\newblock Runtime analysis of {RLS} and {(1+1)} {EA} for the dynamic weighted
  vertex cover problem.
\newblock {\em Theoretical Computer Science}, 832:20--41, 2020.

\bibitem{qian2020multi}
C.~Qian.
\newblock Multi-objective evolutionary algorithms are still good: maximizing
  monotone approximately submodular minus modular functions.
\newblock {\em Evolutionary Computation}, pages 1--28, 2020.

\bibitem{DBLP:conf/ijcai/QianSYT17}
C.~Qian, J.~Shi, Y.~Yu, and K.~Tang.
\newblock On subset selection with general cost constraints.
\newblock In {\em Proceedings of the International Joint Conference on
  Artificial Intelligence, {IJCAI 2017}}, pages 2613--2619, (2017).

\bibitem{DBLP:journals/ai/QianYTYZ19}
C.~Qian, Y.~Yu, K.~Tang, X.~Yao, and Z.~Zhou.
\newblock Maximizing submodular or monotone approximately submodular functions
  by multi-objective evolutionary algorithms.
\newblock {\em Artificial Intelligence}, 275:279--294, 2019.

\bibitem{DBLP:conf/nips/QianYZ15}
C.~Qian, Y.~Yu, and Z.~Zhou.
\newblock Subset selection by {P}areto optimization.
\newblock In {\em Advances in Neural Information Processing Systems 28: Annual
  Conference on Neural Information Processing Systems}, pages 1774--1782,
  (2015).

\bibitem{roostapour2020pareto}
V.~Roostapour, A.~Neumann, and F.~Neumann.
\newblock Evolutionary multi-objective optimization for the dynamic knapsack
  problem.
\newblock {\em arXiv preprint arXiv:2004.12574}, 2020.

\bibitem{DBLP:conf/aaai/RoostapourN0019}
V.~Roostapour, A.~Neumann, F.~Neumann, and T.~Friedrich.
\newblock Pareto optimization for subset selection with dynamic cost
  constraints.
\newblock In {\em Proceedings of the Thirty-Third {AAAI} Conference on
  Artificial Intelligence, {AAAI 2019}}, pages 2354--2361. {AAAI} Press,
  (2019).

\bibitem{DBLP:journals/algorithmica/ShiSFKN19}
F.~Shi, M.~Schirneck, T.~Friedrich, T.~K{\"{o}}tzing, and F.~Neumann.
\newblock Reoptimization time analysis of evolutionary algorithms on linear
  functions under dynamic uniform constraints.
\newblock {\em Algorithmica}, 81(2):828--857, 2019.

\bibitem{vondrak2010submodularity}
J.~Vondr{\'a}k.
\newblock Submodularity and curvature: The optimal algorithm.
\newblock {\em {RIMS} K{\^{o}}ky{\^{u}}roku Bessatsu}, B23:253–--266, 2010.

\bibitem{DBLP:journals/ai/XuBHL07}
K.~Xu, F.~Boussemart, F.~Hemery, and C.~Lecoutre.
\newblock Random constraint satisfaction: Easy generation of hard (satisfiable)
  instances.
\newblock {\em Artificial Intelligence}, 171(8-9):514--534, 2007.

\bibitem{DBLP:conf/gecco/Yang15}
S.~Yang.
\newblock Evolutionary computation for dynamic optimization problems.
\newblock In {\em Proceedings of the Genetic and Evolutionary Computation
  Conference, {GECCO 2015}}, pages 629--649. {ACM}, (2015).

\bibitem{DBLP:conf/aaai/ZhangV16}
H.~Zhang and Y.~Vorobeychik.
\newblock Submodular optimization with routing constraints.
\newblock In {\em Proceedings of the Thirtieth {AAAI} Conference on Artificial
  Intelligence, AAAI 2016}, pages 819--826. {AAAI} Press, 2016.

\end{thebibliography}

\end{document}